\theoremstyle:=definition,remark,plain\do{%
	\expandafter\g@addto@macro\csname th@\theoremstyle\endcsname{%
		\addtolength\thm@preskip\parskip
	}%
}
\newcommand{\ket}[1]{\left| #1 \right>} 
\newcommand{\bra}[1]{\left< #1 \right|} 
\newcommand{\proj}[1]{\left| #1 \rangle\langle #1 \right|} 
\newcommand{\R}{\mathbb{R}}
\newcommand{\Prob}{\mathbb{P}}
\newcommand{\LieG}[1]{\mathcal{#1}}
\newcommand{\eqn}[1]{eq. \eqref{eqn:#1}}
\newcommand{\iden}{\mathds{1}}
\newcommand{\eqgaph}{\hspace{0.5cm}}
\newcommand{\cond}{\mathsf{COND}}
\newcommand{\qcond}{\mathsf{QCOND}}
\newcommand{\pcond}{\mathsf{PCOND}}
\newcommand{\pqcond}{\mathsf{PQCOND}}
\newcommand{\samp}{\mathsf{SAMP}}
\newcommand{\qsamp}{\mathsf{QSAMP}}
\newcommand{\compare}{\textsc{Compare}}
\newcommand{\qcompare}{\textsc{QCompare}}
\newcommand{\estprobtwo}{\textsc{AddEstProb}}
\newcommand{\estprobmul}{\textsc{MulEstProb}}
\newcommand{\qestprobmul}{\textsc{MulEstProbQCond}}
\newcommand{\qestprobadd}{\textsc{AddEstProbQCond}}
\newcommand{\I}{\mathcal{B}}
\newcommand{\E}{\mathbb{E}}
\newcommand{\modlessgap}[1]{\hspace{-0.6em}\mod{#1}}
\newcommand{\rXY}{r_{X,Y}}
\newcommand{\rYX}{r_{Y,X}}
\newcommand{\tilderXY}{\tilde{r}_{X,Y}}
\newcommand{\comment}[1]{\textcolor{red}{(#1)}}
\renewcommand{\comment}[1]{\ignorespaces}
\theoremstyle{plain}
\newtheorem{thm}{Theorem}[section]
\theoremstyle{plain}
\newtheorem{defn}[thm]{Definition}
\theoremstyle{remark}
\theoremstyle{plain}
\newtheorem{lem}[thm]{Lemma}
\theoremstyle{plain}
\theoremstyle{plain}
\newtheorem{cor}[thm]{Corollary}
\theoremstyle{plain}
\newtheorem{problem}[thm]{Problem}
\theoremstyle{plain}
\DeclareMathOperator{\tr}{Tr}
\begin{document}

\title{Quantum conditional query complexity}

\author{Imdad S. B. Sardharwalla} 
\author{Sergii Strelchuk}
\author{Richard Jozsa}
\affil{Department of Applied Mathematics and Theoretical Physics, University of Cambridge, Cambridge, CB3 0WA, U.K.}

\date{}

\maketitle

\begin{abstract}
We define and study a new type of quantum oracle, the quantum conditional oracle, which provides oracle access to the conditional probabilities associated with an underlying distribution.
Amongst other properties, we (a) obtain speed-ups over the best known quantum algorithms for identity testing, equivalence testing and uniformity testing of probability distributions; (b) study the power of these oracles for testing properties of boolean functions, and obtain an algorithm for checking whether an $n$-input $m$-output boolean function is balanced or $\epsilon$-far from balanced; and (c) give a sub-linear algorithm, requiring $\tilde{O}(n^{3/4}/\epsilon)$ queries, for testing whether an $n$-dimensional quantum state is maximally mixed or not.
\end{abstract}


\newcommand{\step}[1]{Step~\ref{alg:step#1}}
\newcommand{\steps}[2]{Steps~\ref{alg:step#1} and \ref{alg:step#2}}
\newcommand{\stepsi}[2]{Steps~\ref{alg:step#1}--\ref{alg:step#2}}
\newcommand{\case}[1]{Case~\ref{enum:qcompareproof#1}}
\newcommand{\twocases}[2]{Cases~\ref{enum:qcompareproof#1} and~\ref{enum:qcompareproof#2}}
\newcommand{\algoutput}[1]{\textup{\textsf{#1}}}
\newcommand{\extreme}{\algoutput{Extreme}}
\newcommand{\low}{\algoutput{Low}}
\newcommand{\high}{\algoutput{High}}
\newcommand{\equal}{\algoutput{Equal}}
\newcommand{\far}{\algoutput{Far}}
\newcommand{\specialcell}[2][c]{\begin{tabular}[#1]{@{}l@{}}#2\end{tabular}}

\section{Introduction} \label{s:intro}


One of the fundamental challenges in statistics is to infer information about properties of large datasets as efficiently as possible. 
This is becoming increasingly important as we collect progressively more data about our world and our lives.
Often one would like to determine a certain property of the collected data while having no physical ability to access all of it.
This can be formalised as the task of \emph{property testing}: determining whether an object has a certain property, or is `far' from having that property, ideally minimising the number of inspections of it.
There has been an explosive growth in recent years in this field~\cite{goldreich_property_1998, goldreich_property_2010, blais_property_2012}, and particularly in the sub-field of \emph{distribution testing}, in which one seeks to learn information about a data set by drawing samples from an associated probability distribution.

The \emph{classical conditional sampling oracle} ($\cond$)~\cite{acharya_chasm_2014,canonne_testing_2015,chakraborty_power_2016} grants access to a distribution $D$ such that one can draw samples not only from $D$, but also from $D_S$, the conditional distribution of $D$ restricted to an arbitrary subset $S$ of the domain. Such oracle access reveals a separation between the classical query complexity of identity testing (i.e. whether an unknown distribution $D$ is the same as some known distribution $D^*$), which takes a constant number of queries, and equivalence testing (i.e. whether two unknown distributions $D_1$ and $D_2$ are the same), which requires $\Omega(\sqrt{\log\log N})$ queries, where $N$ is the size of the domain~\cite{acharya_chasm_2014}.
In this paper we introduce a natural quantum version of the $\cond$ oracle (see Definition~\ref{defn:qcond} below) and study its computational power. 

More specifically, we will consider the $\pcond$ (pairwise-$\cond$) oracle, which only accepts query subsets $S$ of cardinality $2$ or $N$, and introduce the $\pqcond$ (pairwise-$\qcond$) oracle. While being rather restricted in comparison to the full $\cond$ and $\qcond$ oracles, they nevertheless offer significant advantages over the standard sampling oracles.

\subsection{Results} \label{ss:results}
\textbf{Quantum algorithms for property testing problems.}
We study the following property testing tasks for classical probability distributions and present efficient algorithms for their solution using our $\pqcond$ oracle. We compare our results with previously known bounds for the standard quantum sampling oracle $\qsamp$ and the classical $\pcond$ oracle.
\begin{enumerate}
\item {\it Uniformity Test:} Given a distribution $D$ and a promise that $D$ is either the uniform distribution $\mathcal{A}$ or $|D-\mathcal{A}|\geq\epsilon$, where $|\cdot|$ is the $L_1$-norm, decide which of the options holds.
\item {\it Known-distribution Test:} Given a fixed distribution $D^*$ and a promise that either $D=D^*$ or $|D-D^*|\geq\epsilon$, decide which of the options holds.
\item {\it Unknown-distribution Test:} Given two distributions $D_1$ and $D_2$ and a promise that either $D_1=D_2$ or $|D_1-D_2|\geq\epsilon$, decide which of the options holds.
\item {\it Distance from uniformity:} Given a distribution $D$ and the uniform distribution $\mathcal{A}$, estimate $\hat{d} = |D-\mathcal{A}|$.
\end{enumerate}
The query complexities for the above problems are listed in Table~\ref{tbl:resultssummary}, with our new results given in the last column. The notation $\tilde{O}(f(N, \epsilon))$ denotes $O(f(N, \epsilon) \log^k f(N, \epsilon))$ for some $k$, i.e. logarithmic factors are hidden.

\begin{table}[h] 
	\centering
	\def\arraystretch{2}
	\begin{tabular}{>{\bfseries}m{2in} >{\centering}m{1.5in} >{\centering}m{1.5in} >{\centering\arraybackslash}m{1.5in}}
		\hline
			\textbf{Task} &
			\textbf{Standard quantum oracle ($\qsamp$)} &
			\textbf{$\pcond$ oracle}~\cite{canonne_testing_2015} &
			\textbf{$\pqcond$ oracle ~~~~~~}\textit{[this work]}\\
		\hline
		Uniformity Test & $O\left(\frac{N^{1/3}}{\epsilon^{4/3}}\right)$~\cite{bravyi_quantum_2011} & $\tilde{O}\left(\frac{1}{\epsilon^2}\right)$ & $\tilde{O}\left(\frac{1}{\epsilon}\right)$\\
		Known-distribution Test & $\tilde{O}\left(\frac{N^{1/3}}{\epsilon^5}\right)$~\cite{chakraborty_new_2010}& $\tilde{O}\left[ \left( \frac{\log N}{\epsilon} \right)^4 \right]$ & $\tilde{O}\left[ \left( \frac{\log N}{\epsilon} \right)^3 \right]$  \\
		Unknown-distribution Test & $O\left(\frac{N^{1/2}}{\epsilon^6}\right)$~\cite{bravyi_quantum_2011} & $\tilde{O} \left[ \left( \frac{\log^2 N}{\epsilon^7} \right)^3 \right]$  & $\tilde{O} \left[ \left( \frac{\log^2 N}{\epsilon^7} \right)^2 \right]$ \\
		Distance from uniformity \vspace{1ex} & $O\left(\frac{N^{1/2}}{\epsilon^6}\right)$~\cite{bravyi_quantum_2011} \vspace{1ex} & $\tilde{O}\left(\frac{1}{\epsilon^{20}}\right)$ \vspace{1ex} & $\tilde{O} \left( \frac{1}{\epsilon^{13}} \right)$ \vspace{1ex} \\
		\hline
	\end{tabular}
	\caption{Query complexity for property testing problems using three different access models: the standard quantum oracle ($\qsamp$), the $\pcond$ oracle, and our $\pqcond$ oracle.}
	\label{tbl:resultssummary}
\end{table}

\textbf{Testing properties of boolean functions.}
A slight modification of the $\pqcond$ oracle will allow for the testing of properties of boolean functions.

Given $f:\{0, 1\}^n\rightarrow \{0, 1\}^m$ with $n \geq m$, define $F_i \coloneqq |\{x\in \{0,1\}^n : f(x) = i\}|/2^n$ for $i\in\{0, 1\}^m$. The function $f$ is promised to be either:
	\begin{itemize}
		\item a balanced function, i.e. $F_i = \frac{1}{2^m} \hspace{1em}\forall i\in\{0, 1\}^m$; or
		\item $\epsilon$-far from balanced, i.e. $\sum_{i\in\{0, 1\}^m} |F_i-\frac{1}{2^m}| \geq \epsilon$.
	\end{itemize}
Provided we have $\pqcond$ access to $f$, we present a quantum algorithm that decides which of these is the case using $\tilde{O}(1/\epsilon)$ queries.

\textbf{Quantum spectrum testing.}
We consider a quantum cloud-based computation scenario in which one or more small, personal quantum computers query a central quantum data hub $\mathcal{Q}$ to deduce properties of a dataset.

Suppose this hub has access to an $n$-dimensional mixed state $\rho$ (in the form of a full classical description, or simply through having access to a large number of copies of $\rho$), and a personal quantum computer $\mathcal{P}$ wishes to determine properties of $\rho$. Each query from $\mathcal{P}$ is effected as follows:
\begin{enumerate}
	\item $\mathcal{P}$ prepares a state of three registers: the first is classical and describes a basis $\I=\{\ket{b_i}\}_{i\in [n]}$; the second and third are quantum, prepared in a state of $\mathcal{P}$'s choosing. $\mathcal{P}$ sends the three registers to $\mathcal{Q}$.
	\item Given these registers, $\mathcal{Q}$ provides $\pqcond$ access to the distribution $D^{(\rho,\I)}_{[n]}(i)=Tr(\rho \proj{b_i}) = \bra{b_i}\rho\ket{b_i}$, with the quantum registers being the input and output registers for the $\pqcond$ query. $\mathcal{Q}$ finally returns the quantum registers to $\mathcal{P}$.
\end{enumerate}

We consider the problem of testing whether or not $\rho$ is the maximally mixed state. More formally, it is promised that one of the following holds:
	\begin{itemize}
		\item $\|\rho-\iden/n\|_1 = 0$, i.e. $\rho$ is the maximally mixed state; or
		\item $\|\rho-\iden/n\|_1 \geq \epsilon$, i.e. $\rho$ is $\epsilon$-far from the maximally mixed state,
	\end{itemize}
	where $\|\cdot\|_1$ is the trace norm\footnote{For an $(n\times n)$ matrix $A$, $\|A\|_1 = \tr\sqrt{AA^\dagger} =\sum_{i\in[n]} a_i$, where the $a_i$ are the singular values of $A$.}.
The task for $\mathcal{P}$ is to decide which is the case.

We present a quantum algorithm to decide the above problem that uses $\tilde{O} (n^{3/4}/\epsilon)$ $\pqcond$ queries.

This problem has also been studied in a setting where $\mathcal{P}$ has access to an unlimited number of copies of the state $\rho$~\cite{odonnell_quantum_2015}, and an optimal algorithm was presented that used $\tilde{O} (n/\epsilon^2)$ copies of the state.

\subsection{Motivation}

The conditional access model is versatile and well-suited to a wide range of practical applications, a few of which are mentioned below.

\textbf{Lottery machine.}
A \emph{gravity pick lottery machine} works as follows:
$N$ balls, numbered $1, \dots, N$, are dropped into a spinning machine, and after a few moments a ball is released.
One might wish to determine whether or not such a machine is fair, i.e. whether or not a ball is released uniformly at random.
A distribution testing algorithm would correctly decide between the following options (assuming that one is guaranteed to be true) with high probability:

\begin{itemize}
	\item The lottery machine is fair and outputs $i$ with probability $1/N$;
	\item The lottery machine is $\epsilon$-far from uniform.
\end{itemize}
In this example, access to a $\cond$ oracle is equivalent to being able to choose which balls are allowed into the spinner.
Classically, it is known that $\Theta(N^{1/2}/\epsilon^4)$ queries~\cite{batu2001testing} to the $\samp$ oracle are required to determine whether or not a distribution generated by such a lottery machine is uniform. However, given access to the corresponding quantum oracle, $\qsamp$, only $O(N^{1/3}/\epsilon^{4/3})$ queries are required~\cite{bravyi_quantum_2011}.
Using the $\pqcond$ oracle we are able to achieve this with $\tilde{O}(1/\epsilon)$ queries.

\textbf{Predicting movie preferences.} Suppose we had a large enough amount of data about two movies, $A$ and $B$, in order to access the joint probability distribution $D$ describing how many people watch these movies on any given day. 
One would like to find out if people watching movie $A$ are more likely to watch movie $B$. More generally, we ask: is $D$ a product of two independent distributions, or are viewings of movie $A$ correlated with viewings of movie $B$? 
The distribution testing algorithm can be used to decide between the following options:

\begin{itemize}
	\item $D$ is independent; i.e. $D$ is a product of two distributions, $D=D^{(A)} \times D^{(B)}$;
	\item $D$ is $\epsilon$-far from independent; i.e. it is $\epsilon$-far from every product distribution.
\end{itemize}

\textbf{Other tests.} There is a wide range of other informative property tests, including:
\begin{itemize}
	\item Checking if two unknown distributions are identical.
	\item Checking if a distribution is identical to a known reference distribution.
	\item Estimating the support size of a distribution.
	\item Estimating the entropy of a distribution.
\end{itemize}



Many of these have been extensively studied in the classical~\cite{batu_testing_2010, valiant_power_2011, canonne_testing_2015, canonne_testing_2014, diakonikolas_testing_2015, guha_streaming_2006, chan_optimal_2014, chakraborty_new_2010} and quantum~\cite{bravyi_quantum_2011, montanaro_survey_2013} literature, and near-optimal bounds have often been placed on the number of queries required to solve the respective problems.



\subsection{Outline}
In Section~\ref{s:prelim} we introduce notation and define our quantum conditional oracles. In Section~\ref{s:efficientcomparison} we prove a lemma that is subsequently used to obtain our main technical tool---the $\qcompare$ function, which efficiently compares conditional probabilities of a distribution.
In Section~\ref{s:probdist} we apply it to obtain new, efficient query complexity bounds for property testing of probability distributions. In Section~\ref{s:boolean} we test properties of boolean functions, before presenting a quantum spectrum test in Section~\ref{s:qspectrumtesting}.

\section{Preliminaries and Notation} \label{s:prelim}
Let $D$ be a probability distribution over a finite set $[N] \coloneqq \{0, 1, \dots, N-1\}$, where $D(i) \geq 0$ is the weight of the element $i \in [N]$.
Furthermore, if $S\subseteq[N]$, then $D(S)=\sum_{i\in S} D(i)$ is the weight of the set $S$.
If $D(S)>0$, define $D_S$ to be the conditional distribution, i.e. $D_S(i) \coloneqq D(i)/D(S)$ if $i \in S$ and $D_S(i) = 0$ if $i \notin S$.

Below, we recall the definitions of the classical and quantum sampling oracles, and subsequently define the classical and quantum conditional sampling oracles.

\begin{defn}[Classical Sampling Oracle~\cite{canonne_testing_2015}]
	Given a probability distribution $D$ over $[N]$, we define the \emph{classical sampling oracle} $\samp_D$ as follows: each time $\samp_D$ is queried, it returns a single $i \in [N]$, where the probability that element $i$ is returned is $D(i)$.
\end{defn}

\begin{defn}[Quantum Sampling Oracle~\cite{bravyi_quantum_2011}] \label{defn:qsamp}
	Given a probability distribution $D$ over $[N]$, let $T \in \mathbb{N}$ be some specified integer, and assume that $D$ can be represented by a mapping $O_D:[T]\rightarrow[N]$ such that for any $i\in[N]$, $D(i)$ is proportional to the number of elements in the pre-image of $i$, i.e. $D(i) = |\{ t \in [T] : O_D(t) = i \}| / T$.
	In other words, $O_D$ labels the elements of $[T]$ by $i \in [N]$, and the $D(i)$ are the frequencies of these labels, and are thus all rational with denominator $T$.
	
	Then each query to the \emph{quantum sampling oracle} $\qsamp_D$ applies the unitary operation $U_D$, described by its action on basis states:
	\[ U_D \ket{t} \ket{\beta} = \ket{t} \ket{\beta + O_D(t)\modlessgap{N}}. \]
	In particular,
	\[ U_D \ket{t} \ket{0} = \ket{t} \ket{O_D(t)}. \]
	
	As an example, note that querying with a uniformly random $t \in [T]$ in the first register will result in $i \in [N]$ in the second register with probability $D(i)$.
\end{defn}

\begin{defn}[Classical Conditional Sampling Oracle~\cite{canonne_testing_2015}]
	Given a probability distribution $D$ over $[N]$ and a set $S\subseteq [N]$ such that $D(S)>0$, we define the \emph{classical conditional sampling oracle} $\cond_D$ as follows: each time $\cond_D$ is queried with query set $S$, it returns a single $i \in [N]$, where the probability that element $i$ is returned is $D_S(i)$.
\end{defn}

We are now ready to define a new \emph{quantum conditional sampling oracle}, a quantum version of $\cond_D$.

\begin{defn}[Quantum Conditional Sampling Oracle] \label{defn:qcond}
	Given a probability distribution $D$ over $[N]$, let $T \in \mathbb{N}$ be some specified integer, and assume that there exists a mapping $O_D: \mathcal{P}([N]) \times [T] \rightarrow [N]$, where $\mathcal{P}([N])$ is the power set of $[N]$, such that for any $S \subseteq [N]$ with $D(S) > 0$ and any $i \in [N]$, $D_S(i) = |\{ t \in [T] : O_D(S, t) = i \}| / T$.
	
	Then each query to the \emph{quantum conditional sampling oracle} $\qcond_D$ applies the unitary operation $U_D$, defined below.
	
	$U_D$ acts on 3 registers: 
	\begin{itemize}
		\item The first consists of $N$ qubits, whose computational basis states label the $2^N$ possible query sets $S$;
		\item The second consists of $\log T$ qubits that describe an element of $[T]$; and
		\item The third consists of $\log N$ qubits to store the output, an element of $[N]$.
	\end{itemize}
	The action of the oracle on basis states is
	\[ U_D \ket{S} \ket{t} \ket{\beta} = \ket{S} \ket{t} \ket{\beta + O_D(A, t) \modlessgap{N}}. \]
	
	In particular,
	\[ U_D \ket{S} \ket{t} \ket{0} = \ket{S} \ket{t} \ket{O_D(A, t)}. \]
\end{defn}

\textit{Remark:} Note that querying $\qcond_D$ with query set $S=[N]$ is equivalent to a query to $\qsamp_D$.

The $\pcond_D$ oracle, described in~\cite{canonne_testing_2015}, only accepts query subsets $S$ of cardinality $2$ or $N$. Below we define its quantum analogue, the $\pqcond_D$ oracle.

\begin{defn}[Pairwise Conditional Sampling Oracle]
	The $\pqcond_D$ oracle is equivalent to the $\qcond_D$ oracle, with the added requirement that the query set $S$ must satisfy $|S|=2$ or $N$, i.e. the distribution can only be conditioned over pairs of elements or the whole set.
\end{defn}

\section{Efficient comparison of conditional probabilities}\label{s:efficientcomparison}

In this section we first prove a lemma to improve the dependency on success probability for a general probabilistic algorithm. We subsequently use this result to prove our main technical tool, the $\qcompare$ algorithm, which compares conditional probabilities of a distribution, and is crucial to our improved property tests.

\subsection{Improving dependence on success probability}

The following lemma, proved in Section~\ref{ss:probabilitydependence}, provides a general method for improving the dependence between the number of queries made and the success probability of the algorithm. 

\begin{lem}\label{lem:probabilitydependence}
	Suppose an algorithm $\textsc{Alg1}(\xi, \epsilon, \delta)$ $( \epsilon > 0, \delta \in (0, 1] )$ outputs an (additive) approximation to $f(\xi)\in \R$. More formally, suppose it outputs $\tilde{f}(\xi)$ such that $\Prob[|\tilde{f}(\xi) - f(\xi)| \leq \epsilon] \geq 1 - \delta$ using $M(\xi, \epsilon, \delta)$ queries to a classical/quantum oracle, for some function $M$.

	Then there exists an algorithm $\textsc{Alg2}(\xi, \epsilon, \delta)$ that makes $\Theta ( M( \xi, \epsilon, \frac{1}{10}) \log(1/\delta) )$ queries to the same oracle and outputs $\tilde{f}(\xi)$ such that $ \Prob[|\tilde{f}(\xi) - f(\xi)| \leq \epsilon] \geq 1 - \delta$, i.e. the dependence of the number of queries on the success probability can be taken to be $\log(1/\delta)$.
\end{lem}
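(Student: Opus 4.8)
The plan is to use the standard median-amplification (''median trick'') argument. First I would run $\textsc{Alg1}(\xi, \epsilon, \tfrac{1}{10})$ a total of $k$ times independently, obtaining outputs $Y_1, \dots, Y_k$, each of which lands in the interval $[f(\xi) - \epsilon, f(\xi) + \epsilon]$ with probability at least $9/10$. The algorithm $\textsc{Alg2}$ then outputs the median $\tilde{f}(\xi) = \mathrm{median}(Y_1, \dots, Y_k)$. The cost is $k \cdot M(\xi, \epsilon, \tfrac{1}{10})$ queries, so to match the claimed bound $\Theta\!\left( M(\xi, \epsilon, \tfrac{1}{10}) \log(1/\delta) \right)$ it suffices to take $k = \Theta(\log(1/\delta))$.

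The key observation is that the median fails to lie in $[f(\xi) - \epsilon, f(\xi) + \epsilon]$ only if at least half of the $Y_j$ lie outside that interval. Letting $Z_j$ be the indicator that $Y_j$ lands outside, the $Z_j$ are i.i.d.\ Bernoulli with $\E[Z_j] \leq 1/10$, so the bad event is contained in $\{\sum_j Z_j \geq k/2\}$. A Chernoff/Hoeffding bound gives $\Prob[\sum_j Z_j \geq k/2] \leq e^{-ck}$ for an absolute constant $c > 0$ (for instance, via $\Prob[\sum_j Z_j \geq k/2] \leq \Prob[\sum_j(Z_j - \E Z_j) \geq 2k/5] \leq e^{-2k/25}$). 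Choosing $k = \lceil C \log(1/\delta) \rceil$ for a suitable absolute constant $C$ then forces this probability below $\delta$, which yields $\Prob[|\tilde{f}(\xi) - f(\xi)| \leq \epsilon] \geq 1 - \delta$ as required.

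There is no serious obstacle here; the only mild subtlety worth spelling out is the reduction ''median outside interval $\Rightarrow$ at least half the samples outside interval,'' which is immediate from the definition of the median of a finite list, together with the observation that the $\Theta(\cdot)$ in the statement absorbs the ceiling and the constant $C$. One should also note in passing that $\textsc{Alg1}$ is used purely as a black box — its internal use of the oracle is irrelevant, only the query count $M$ matters — so the construction applies verbatim whether the oracle is classical or quantum, and the independence of the $k$ runs is the only structural assumption invoked.
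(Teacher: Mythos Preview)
Your proposal is correct and matches the paper's own proof essentially line for line: the paper also runs $\textsc{Alg1}(\xi,\epsilon,\tfrac{1}{10})$ a total of $m=\Theta(\log(1/\delta))$ times, outputs the median of the results, and uses a Chernoff bound to show that with probability at least $1-\delta$ a strict majority of the outputs land in $[f(\xi)-\epsilon,f(\xi)+\epsilon]$, forcing the median to lie there as well. The only cosmetic difference is that the paper fixes $m$ even and appeals to a specific Chernoff inequality from~\cite{canonne_testing_2015}, whereas you quote an explicit Hoeffding constant; neither affects the argument.
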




Applying this lemma to Theorem 5 of~\cite{bravyi_quantum_2011} gives an exponential improvement, from $1/\delta$ to $\log(1/\delta)$, in the dependence on the success probability given there. This is summarised in the theorem below.

\begin{thm}
	\label{thm:estprobtwo}
	There exists a quantum algorithm $\estprobtwo(D,S,M)$ that takes as input a distribution $D$ over $[N]$, a set $S\subset [N]$ and an integer $M$. The algorithm makes exactly $M$ queries to the $\qsamp_D$ oracle and outputs $\tilde{D}(S)$, an approximation to $D(S)$, such that $\Prob[|\tilde{D}(S) - D(S)| \leq \epsilon] \geq 1-\delta$ for all $\epsilon > 0$ and $\delta\in (0, 1]$ satisfying
	\[ M \geq c\log(1/\delta) \max\left( \frac{\sqrt{D(S)}}{\epsilon}, \frac{1}{\sqrt{\epsilon}} \right),  \]	
	where $c=O(1)$ is some constant.	
\end{thm}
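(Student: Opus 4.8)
The plan is to take the algorithm guaranteed by Theorem 5 of \cite{bravyi_quantum_2011}, which estimates $D(S)$ using $\qsamp_D$ queries but with a poor $1/\delta$ dependence on the failure probability, and feed it into Lemma \ref{lem:probabilitydependence} to boost the success probability at only a logarithmic cost. Concretely, Theorem 5 of \cite{bravyi_quantum_2011} should provide, for each fixed target accuracy $\epsilon$ and a \emph{constant} confidence (say $9/10$), an algorithm making $M_0 = c' \max\!\left( \sqrt{D(S)}/\epsilon,\ 1/\sqrt{\epsilon} \right)$ queries to $\qsamp_D$ and outputting $\tilde D(S)$ with $\Prob[|\tilde D(S) - D(S)| \le \epsilon] \ge 9/10$. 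Set $f(\xi) = D(S)$ with $\xi$ encoding the pair $(D,S)$, so that $M(\xi,\epsilon,\tfrac{1}{10}) = M_0$. Applying Lemma \ref{lem:probabilitydependence} yields an algorithm $\textsc{Alg2}$ making $\Theta\!\left( M_0 \log(1/\delta) \right)$ queries and achieving confidence $1-\delta$; unwinding the $\Theta$ and the definition of $M_0$, this is exactly the stated bound $M \ge c\log(1/\delta)\max(\sqrt{D(S)}/\epsilon,\,1/\sqrt\epsilon)$ for an appropriate constant $c = O(1)$.

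The remaining point is the requirement in the theorem statement that $\estprobtwo$ make \emph{exactly} $M$ queries. The booster of Lemma \ref{lem:probabilitydependence} is built from $\Theta(\log(1/\delta))$ independent runs of $\textsc{Alg1}$, each making precisely $M_0$ queries (note that the oracle is $\qsamp_D$, and a single call to $U_D$ on any input counts as one query, so the per-run count is fixed and deterministic). Hence the total query count is a fixed number $M^\star = \Theta(M_0\log(1/\delta))$, independent of the randomness. For any $M \ge M^\star$ satisfying the hypothesised inequality we simply run this procedure and then perform $M - M^\star$ additional ``dummy'' queries to $\qsamp_D$ (e.g. on the all-zero input, discarding the output), which do not affect correctness; this makes the query count exactly $M$ as required. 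One should also check the monotonicity needed for this padding to make sense: the inequality $M \ge c\log(1/\delta)\max(\sqrt{D(S)}/\epsilon,1/\sqrt\epsilon)$ is an upper constraint on the desired $(\epsilon,\delta)$ given $M$, and for the chosen $(\epsilon,\delta)$ we have $M^\star$ matching the right-hand side up to the constant $c$, so any admissible $M$ is $\ge M^\star$.

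The main obstacle I anticipate is purely bookkeeping rather than conceptual: one must verify that the $\max(\sqrt{D(S)}/\epsilon,\,1/\sqrt\epsilon)$ form of the query complexity in Theorem 5 of \cite{bravyi_quantum_2011} is stated (or can be re-derived) for a \emph{fixed constant} confidence, so that $M(\xi,\epsilon,\tfrac{1}{10})$ indeed equals this expression up to a constant; if the cited theorem instead bundles the confidence into $\epsilon$ or states a different trade-off, a small amount of rescaling of $\epsilon$ (by a constant factor, absorbing it into $c$) will be needed before Lemma \ref{lem:probabilitydependence} applies cleanly. Everything else---the logarithmic boosting, the exact-query-count padding, and tracking the $O(1)$ constants---is routine.
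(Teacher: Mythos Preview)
Your proposal is correct and matches the paper's approach exactly: the paper's entire argument is the single sentence ``Applying this lemma [Lemma~\ref{lem:probabilitydependence}] to Theorem~5 of~\cite{bravyi_quantum_2011} gives an exponential improvement, from $1/\delta$ to $\log(1/\delta)$,'' and your write-up fleshes this out faithfully, including the padding trick to hit exactly $M$ queries (which the paper leaves implicit).
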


A multiplicative version Theorem~\ref{thm:estprobtwo} follows straightforwardly:
\begin{thm}
	\label{thm:estprobmul}
	There exists a quantum algorithm $\estprobmul(D,S,M)$ that takes as input a distribution $D$ over $[N]$, a set $S\subset [N]$ and an integer $M$. The algorithm makes exactly $M$ queries to the $\qsamp_D$ oracle and outputs $\tilde{D}(S)$, an approximation to $D(S)$, such that $\Prob[\tilde{D}(S) \in [1-\epsilon, 1+\epsilon] D(S) ] \geq 1-\delta$ for all $\epsilon,\delta \in (0, 1]$ satisfying
	\[ M \geq \frac{c \log(1/\delta)}{\epsilon \sqrt{D(S)}}, \]
	where $c=O(1)$ is some constant.
\end{thm}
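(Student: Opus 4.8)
The plan is to observe that $\estprobmul$ can be taken to be literally the algorithm $\estprobtwo$ of Theorem~\ref{thm:estprobtwo}: both take the same input $(D,S,M)$ and make exactly $M$ queries to $\qsamp_D$, so nothing about the procedure changes; only the way we interpret and optimise the accuracy guarantee does. (We may assume $D(S)>0$, since otherwise the stated bound on $M$ is never satisfiable and the claim is vacuous.)

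First I would invoke Theorem~\ref{thm:estprobtwo} with the additive error parameter set to $\epsilon' \coloneqq \epsilon\, D(S)$ and the same failure probability $\delta$. Since $\epsilon \in (0,1]$ and $D(S) \in (0,1]$ we have $\epsilon' > 0$, so the hypotheses are met, and $\estprobtwo(D,S,M)$ then outputs $\tilde D(S)$ with $\Prob[\,|\tilde D(S) - D(S)| \le \epsilon\,D(S)\,] \ge 1 - \delta$ provided
\[ M \;\ge\; c\log(1/\delta)\,\max\!\left( \frac{\sqrt{D(S)}}{\epsilon\,D(S)},\ \frac{1}{\sqrt{\epsilon\,D(S)}} \right) \;=\; c\log(1/\delta)\,\max\!\left( \frac{1}{\epsilon\sqrt{D(S)}},\ \frac{1}{\sqrt{\epsilon\,D(S)}} \right). \]
The event $|\tilde D(S) - D(S)| \le \epsilon\,D(S)$ is exactly the event $\tilde D(S) \in [1-\epsilon, 1+\epsilon]\,D(S)$, which is precisely the multiplicative guarantee claimed.

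The remaining step, and the only computation, is to simplify the maximum: I would check that the first argument dominates whenever $\epsilon \le 1$. Indeed, $\frac{1}{\epsilon\sqrt{D(S)}} \ge \frac{1}{\sqrt{\epsilon\,D(S)}}$ is equivalent, after clearing denominators and squaring (all quantities positive), to $\epsilon \le \sqrt{\epsilon}$, i.e. to $\epsilon \le 1$, which holds by assumption. Hence the maximum equals $\frac{1}{\epsilon\sqrt{D(S)}}$, and the requirement on $M$ reduces to $M \ge \frac{c\log(1/\delta)}{\epsilon\sqrt{D(S)}}$ with the same $c = O(1)$, as stated. There is no genuine obstacle: the content is entirely carried by Theorem~\ref{thm:estprobtwo}, and this proposition is just the routine reparametrisation of an additive bound into a multiplicative one, together with the elementary observation that the $\sqrt{D(S)}/\epsilon'$-type term beats the $1/\sqrt{\epsilon'}$-type term in the regime $\epsilon \le 1$.
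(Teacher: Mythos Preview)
Your proposal is correct and is precisely the ``straightforward'' derivation the paper alludes to: the paper does not spell out a proof but simply states that the multiplicative version follows straightforwardly from Theorem~\ref{thm:estprobtwo}, and your reparametrisation $\epsilon' = \epsilon\,D(S)$ together with the observation that $\frac{1}{\epsilon\sqrt{D(S)}}$ dominates $\frac{1}{\sqrt{\epsilon D(S)}}$ for $\epsilon \le 1$ is exactly that derivation.
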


Access to the $\qcond_D$ oracle effectively gives us access to the oracle $\qsamp_{D_S}$ for any $S\subseteq [N]$, and this allows us to produce stronger versions of Theorems~\ref{thm:estprobtwo} and~\ref{thm:estprobmul}:

\begin{thm}
	\label{thm:estprobaddqcond}
	There exists a quantum algorithm $\qestprobadd(D,S,R,M)$ that takes as input a distribution $D$ over $[N]$, a set $S\subseteq [N]$ with $D(S) > 0$, a subset $R \subset S$ and an integer $M$. The algorithm makes exactly $M$ queries to the $\qcond_D$ oracle and outputs $\tilde{D}_S(R)$, an approximation to $D_S(R)$, such that $\Prob[|\tilde{D}_S(R) - D_S(R)| \leq \epsilon] \geq 1-\delta$ for all $\epsilon > 0$ and $\delta\in (0, 1]$ satisfying
	\[ M \geq c\log(1/\delta) \max\left( \frac{\sqrt{D_S(R)}}{\epsilon}, \frac{1}{\sqrt{\epsilon}} \right),  \]	
	where $c=O(1)$ is some constant.
\end{thm}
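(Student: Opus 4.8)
The plan is to reduce Theorem~\ref{thm:estprobaddqcond} to Theorem~\ref{thm:estprobtwo} by exploiting the remark immediately following Definition~\ref{defn:qcond}: a query to $\qcond_D$ with a fixed query set $S$ is precisely a query to $\qsamp_{D_S}$, since $O_D(S, \cdot):[T]\to[N]$ is exactly the labelling mapping required by Definition~\ref{defn:qsamp} for the distribution $D_S$ (with the same parameter $T$). Concretely, I would define $\qestprobadd(D, S, R, M)$ to simply run the algorithm $\estprobtwo(D_S, R, M)$ of Theorem~\ref{thm:estprobtwo}, where every call that algorithm makes to $\qsamp_{D_S}$ is implemented by preparing the first register of $\qcond_D$ in the computational basis state $\ket{S}$ (and leaving it untouched throughout, so it factors out), and using the remaining two registers exactly as the $\qsamp_{D_S}$ registers. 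Since the promise $D(S)>0$ guarantees $D_S$ is a genuine probability distribution, Theorem~\ref{thm:estprobtwo} applies verbatim.

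The main steps in order: (i) observe that the mapping $O_D(S,\cdot)$ witnesses $D_S$ in the sense of Definition~\ref{defn:qsamp}, so that $\qsamp_{D_S}$ is well-defined and is simulated by a single $\qcond_D$ query with the set-register held at $\ket{S}$; (ii) invoke Theorem~\ref{thm:estprobtwo} with the distribution $D_S$ and subset $R\subset S$ (note $R\subseteq[N]$ as required there, and $R$ being a strict subset of $S$ matches the hypothesis $S\subsetneq[N]$ of that theorem after relabelling, or one handles $R=S$ trivially by outputting $1$), yielding an output $\tilde D_S(R)$ with $\Prob[|\tilde D_S(R) - D_S(R)|\le\epsilon]\ge 1-\delta$ whenever $M \ge c\log(1/\delta)\max(\sqrt{D_S(R)}/\epsilon, 1/\sqrt\epsilon)$; (iii) note the query count is preserved exactly at $M$, since each $\qsamp_{D_S}$ query is realised by exactly one $\qcond_D$ query. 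This gives precisely the claimed bound.

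I do not anticipate a genuine obstacle here — the theorem is essentially a definitional corollary — but the one point that deserves care is the bookkeeping around which set $S$ is ``hard-wired'' into the oracle call versus supplied adaptively: one must check that running an algorithm designed for $\qsamp_{D_S}$ inside the $\qcond_D$ model does not require any forbidden operation (it does not, since one may always prepare a fixed classical label $\ket{S}$ and the oracle acts as the identity on that register). A minor secondary point is ensuring the edge case $D_S(R)\in\{0,1\}$ is consistent with the stated bound; when $D_S(R)=0$ the term $1/\sqrt\epsilon$ dominates and the argument goes through unchanged, and when $R=S$ one can short-circuit. Thus the proof is short: set up the simulation, apply Theorem~\ref{thm:estprobtwo} to $D_S$, and read off the bound.
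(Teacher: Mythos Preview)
Your proposal is correct and matches the paper's approach exactly. The paper does not give a separate proof of Theorem~\ref{thm:estprobaddqcond}; it simply states, in the sentence preceding the theorem, that ``Access to the $\qcond_D$ oracle effectively gives us access to the oracle $\qsamp_{D_S}$ for any $S\subseteq [N]$, and this allows us to produce stronger versions of Theorems~\ref{thm:estprobtwo} and~\ref{thm:estprobmul}'', which is precisely the reduction you spell out.
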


\begin{thm}
	\label{thm:estprobmulqcond}
	There exists a quantum algorithm $\qestprobmul(D,S,R,M)$ that takes as input a distribution $D$ over $[N]$, a set $S\subseteq [N]$ with $D(S) > 0$, a subset $R \subset S$ and an integer $M$. The algorithm makes exactly $M$ queries to the $\qcond_D$ oracle and outputs $\tilde{D}_S(R)$, an approximation to $D_S(R)$, such that $\Prob[\tilde{D}_S(R) \in [1-\epsilon,1+\epsilon] D_S(R)] \geq 1-\delta$ for all $\epsilon,\delta \in (0, 1]$ satisfying
	\[ M \geq \frac{c \log(1/\delta)}{\epsilon \sqrt{D_S(R)}}, \]
		where $c=O(1)$ is some constant.
\end{thm}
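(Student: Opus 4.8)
The plan is to reduce the statement to the unconditional result Theorem~\ref{thm:estprobmul} by observing that fixing the first register of the $\qcond_D$ oracle to $\ket{S}$ turns it into a $\qsamp_{D_S}$ oracle, and then to run $\estprobmul$ on the conditional distribution $D_S$ and the set $R$.

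First I would make the simulation precise. Let $O_D : \mathcal{P}([N]) \times [T] \to [N]$ be the mapping guaranteed by Definition~\ref{defn:qcond}, so that $D_S(i) = |\{t \in [T] : O_D(S,t) = i\}|/T$ for every $i \in [N]$; this requires $D(S) > 0$, which is part of the hypothesis. Define $O_{D_S} : [T] \to [N]$ by $O_{D_S}(t) \coloneqq O_D(S,t)$. Then $O_{D_S}$ is exactly a mapping of the type demanded by Definition~\ref{defn:qsamp}: it represents the distribution $D_S$ (a distribution over $[N]$, supported on $S$) with denominator $T$. Moreover, the unitary $U_D$ of $\qcond_D$, restricted to inputs whose first register is $\ket{S}$, acts as
\[ U_D \ket{S}\ket{t}\ket{\beta} = \ket{S}\ket{t}\ket{\beta + O_{D_S}(t) \modlessgap{N}}, \]
which is precisely the action of $\qsamp_{D_S}$ on the last two registers, with the inert register $\ket{S}$ carried along unchanged. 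Hence a single query to $\qcond_D$ implements a single query to $\qsamp_{D_S}$.

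Next I would invoke the already-established result. Run $\estprobmul(D_S, R, M)$ of Theorem~\ref{thm:estprobmul} with the distribution $D_S$ and the set $R$; since $R \subset S$, the set $R$ is a proper subset of $[N]$ and $D_S(R) = D(R)/D(S)$ is well defined. That algorithm makes exactly $M$ queries to $\qsamp_{D_S}$, and by the above each such query is realised by one query to $\qcond_D$; therefore $\qestprobmul(D,S,R,M)$ makes exactly $M$ queries to $\qcond_D$. It outputs $\tilde D_S(R)$ with $\Prob[\tilde D_S(R) \in [1-\epsilon,1+\epsilon] D_S(R)] \geq 1-\delta$ whenever $M \geq \frac{c \log(1/\delta)}{\epsilon \sqrt{D_S(R)}}$, which is exactly the claimed bound with the same constant $c = O(1)$.

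There is no genuine obstacle here; the only points needing care are bookkeeping: verifying that the representability hypothesis of Definition~\ref{defn:qsamp} transfers correctly (it does, with the same $T$, because the $\qcond_D$ mapping $O_D$ already encodes $D_S$ with denominator $T$), and that the extra, never-modified $\ket{S}$ register does not interfere with the internal workings of $\estprobmul$. Finally, I note that running the identical two-step reduction with Theorem~\ref{thm:estprobtwo} in place of Theorem~\ref{thm:estprobmul} proves the additive variant, Theorem~\ref{thm:estprobaddqcond}.
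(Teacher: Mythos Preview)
Your proposal is correct and follows exactly the approach the paper takes: the paper simply notes that ``access to the $\qcond_D$ oracle effectively gives us access to the oracle $\qsamp_{D_S}$ for any $S\subseteq [N]$'' and then states Theorems~\ref{thm:estprobaddqcond} and~\ref{thm:estprobmulqcond} as the resulting strengthenings of Theorems~\ref{thm:estprobtwo} and~\ref{thm:estprobmul}, without writing out the details you supply.
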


\subsection{The \textmd{\textsc{QCompare}} algorithm}
An important routine used in many classical distribution testing protocols (see~\cite{canonne_testing_2015}) is the $\compare$ function, which outputs an estimate of the ratio $\rXY \coloneqq D(Y)/D(X)$ of the weights of two disjoint subsets $X,Y\subset [N]$ over $D$. As stated in Section 3.1 of~\cite{canonne_testing_2015}, if $X$ and $Y$ are disjoint, $D(X \cup Y)>0$, and $1/K \leq \rXY \leq K$ for some integer $K\geq 1$, the algorithm outputs $\tilderXY \in [1 - \eta, 1 + \eta] \rXY$ with probability at least $1-\delta$ using only $\Theta(K \log(1/\delta) / \eta^2)$ $\cond_D$ queries. Surprisingly, the number of queries is independent of $N$, the size of the domain of the distribution.

Here we introduce a procedure called $\qcompare$ that uses the $\qcond_D$ oracle and subsequent quantum operations to perform a similar function to $\compare$, achieving the same success probability  and bound on the error with $\Theta(\sqrt{K} \log(1/\delta) / \eta)$ queries.

We now use \qestprobadd~and \qestprobmul~to create the \qcompare~procedure.

\begin{alg}[H]
	\caption{$\qcompare(D, X, Y, \eta, K, \delta)$}\label{alg:qcompare}
	
	\textbf{Input:} $\qcond$ access to a probability distribution $D$ over $[N]$, disjoint subsets $X,Y\subset [N]$ such that $D(X \cup Y) > 0$, parameters $K\geq 1$, $\eta \in (0, \frac{3}{8K})$, and $\delta\in(0, 1]$.
	
	\begin{enumerate}
		\item Set $M = \Theta\left(\frac{\sqrt{K} \log(1/\delta)}{\eta}\right)$. \label{alg:step1}
		\item Set $\tilde{w}_+(X) = \qestprobadd(D, X \cup Y, X, M)$. \label{alg:step2a}
		\item Set $\tilde{w}_+(Y) = \qestprobadd(D, X \cup Y, Y, M)$. \label{alg:step2b}
		\item Set $\tilde{w}_\times(X) = \qestprobmul(D, X \cup Y, X, M)$. \label{alg:step3a}
		\item Set $\tilde{w}_\times(Y) = \qestprobmul(D, X \cup Y, Y, M)$. \label{alg:step3b}
		\item Check that $\tilde{w}_+(X) \leq \frac{3K}{3K+1} - \frac{\eta}{3}$. If the check fails, return \low~and exit.\label{alg:step4a}
		\item Check that $\tilde{w}_+(Y) \leq \frac{3K}{3K+1} - \frac{\eta}{3}$. If the check fails, return \high~and exit.\label{alg:step4b}
		\item Return $\tilderXY = \frac{\tilde{w}_\times(Y)}{\tilde{w}_\times(X)}$. \label{alg:step5}
	\end{enumerate}
\end{alg}

\begin{thm}\label{thm:qcompare}
	Given the input as described, \qcompare~(Algorithm~\ref{alg:qcompare}) outputs \low, \high, or a value $\tilderXY > 0$, and satisfies the following:
	\begin{enumerate}
		\item If $1/K \leq \rXY \leq K$, then with probability at least $1-\delta$ the procedure outputs a value $\tilderXY \in [1-\eta,1+\eta]\rXY$;
		\item If $\rXY > K$ then with probability at least $1-\delta$ the procedure outputs either \high~or a value $\tilderXY \in [1-\eta,1+\eta]\rXY$;
		\item If $\rXY < 1/K$ then with probability at least $1-\delta$ the procedure outputs either \low~or a value $\tilderXY \in [1-\eta,1+\eta]\rXY$.
	\end{enumerate}
	The procedure performs $\Theta\left(\frac{\sqrt{K} \log(1/\delta)}{\eta}\right)$ $\qcond_D$ queries on the set $X \cup Y$ via use of \qestprobadd~and \qestprobmul.
\end{thm}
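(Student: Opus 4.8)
The plan is to trace the execution of Algorithm~\ref{alg:qcompare} while keeping track of the two conditional masses $w(X)\coloneqq D_{X\cup Y}(X)=D(X)/D(X\cup Y)$ and $w(Y)\coloneqq D_{X\cup Y}(Y)$. Since $X,Y$ are disjoint, $w(X)+w(Y)=1$ and $\rXY=w(Y)/w(X)$, so the three hypotheses on $\rXY$ amount to one-sided bounds on these masses: $\rXY\le K\iff w(X)\ge\frac{1}{K+1}\iff w(Y)\le\frac{K}{K+1}$, and symmetrically for $\rXY\ge1/K$. First I would fix the constant hidden in $M=\Theta(\sqrt K\log(1/\delta)/\eta)$ large enough that, applying Theorem~\ref{thm:estprobaddqcond} with target additive error $\eta/3$ to the two $\qestprobadd$ calls and Theorem~\ref{thm:estprobmulqcond} with target multiplicative error $\epsilon_Z\coloneqq c_2\,\eta/\sqrt{K\,w(Z)}$ to the two $\qestprobmul$ calls (where $c_2\le1$ is a small absolute constant, pinned down later; note $\epsilon_Z<1$ whenever $w(Z)\ge\frac{1}{3K+1}$, which is exactly the regime in which it is used), each of the four estimation calls fails with probability at most $\delta/4$. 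A union bound then yields a ``good event'' $\mathcal E$ of probability at least $1-\delta$ on which $|\tilde w_+(X)-w(X)|\le\frac{\eta}{3}$, $|\tilde w_+(Y)-w(Y)|\le\frac{\eta}{3}$, and, for each $Z\in\{X,Y\}$ with $w(Z)\ge\frac{1}{3K+1}$, also $\tilde w_\times(Z)\in[1\pm\epsilon_Z]\,w(Z)$. That this $M$ is genuinely $\Theta(\sqrt K\log(1/\delta)/\eta)$ — so that the total number of $\qcond_D$ queries on $X\cup Y$, namely $4M$, is as claimed — uses only $w(Z)\le1$, $K\ge1$ and $\eta<\frac{3}{8K}$, which ensure the $1/\sqrt{\eta/3}$ term in Theorem~\ref{thm:estprobaddqcond} never dominates (and $\delta\le\tfrac12$ may be assumed by boosting, so $\log(4/\delta)=\Theta(\log(1/\delta))$).

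The core of the argument is a deterministic observation, valid on $\mathcal E$: \emph{if \qcompare\ reaches \step{5}, then $w(X)\ge\frac{1}{3K+1}$ and $w(Y)\ge\frac{1}{3K+1}$.} Indeed, passing the \step{4a} test means $\tilde w_+(X)\le\frac{3K}{3K+1}-\frac{\eta}{3}$, whence on $\mathcal E$ we get $w(X)\le\tilde w_+(X)+\frac{\eta}{3}\le\frac{3K}{3K+1}$, i.e. $w(Y)=1-w(X)\ge\frac{1}{3K+1}$; passing the \step{4b} test gives $w(X)\ge\frac{1}{3K+1}$ in the same way. Given this, the multiplicative clause of $\mathcal E$ applies to both $Z=X$ and $Z=Y$, with $\epsilon_X,\epsilon_Y\le c_2\,\eta\sqrt{(3K+1)/K}\le2c_2\,\eta$; fixing $c_2\le\tfrac18$ then gives, on $\mathcal E$,
\[ \tilderXY=\frac{\tilde w_\times(Y)}{\tilde w_\times(X)}\in\Bigl[\tfrac{1-\eta/4}{1+\eta/4},\ \tfrac{1+\eta/4}{1-\eta/4}\Bigr]\rXY\subseteq[1-\eta,1+\eta]\,\rXY, \]
the last inclusion being the elementary bound $\frac{1+\eta/4}{1-\eta/4}\le1+\eta$ and its mirror image, valid for $0<\eta<2$. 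So on $\mathcal E$, whenever the algorithm returns a number, that number lies in $[1-\eta,1+\eta]\rXY$; this settles the ``value'' alternative in all three parts of the theorem at once.

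It remains to check, still on $\mathcal E$, that the \low\ and \high\ exits fire only when permitted, which I would do by feeding the hypothesis-driven one-sided bounds on $w(X),w(Y)$ through the $(\pm\eta/3)$-accurate additive estimates and comparing against the threshold $\frac{3K}{3K+1}-\frac{\eta}{3}$. In part~(1), $\rXY\in[1/K,K]$ forces $w(X),w(Y)\le\frac{K}{K+1}$, and the inequality $\frac{K}{K+1}+\frac{\eta}{3}\le\frac{3K}{3K+1}-\frac{\eta}{3}$ — which, via $\eta<\frac{3}{8K}$, reduces to $(5K+1)(K-1)\ge0$ — shows that both the \step{4a} and \step{4b} tests pass, so a numerical value is returned. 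In part~(2), $\rXY>K$ forces $w(X)<\frac{1}{K+1}$, and $\frac{1}{K+1}+\frac{\eta}{3}\le\frac{3K}{3K+1}-\frac{\eta}{3}$ — which reduces similarly to $12K^3-3K^2-8K-1\ge0$ — shows the \step{4a} test passes, so \low\ is never output and only \high\ or a (good) numerical value can result. Part~(3) is part~(2) after swapping $X\leftrightarrow Y$, and rules out \high.

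The step I expect to fight hardest with is precisely this chasing of constants: the threshold $\frac{3K}{3K+1}$, the buffer $\frac{\eta}{3}$, and the restriction $\eta<\frac{3}{8K}$ are mutually tuned so that all of the displayed scalar inequalities hold simultaneously for every $K\ge1$ — several of them with equality at $K=1$, so there is no room to be sloppy — and the constant absorbed into $M$ has to be chosen after, and compatibly with, both $c_2$ and the probabilistic hypotheses of Theorems~\ref{thm:estprobaddqcond} and~\ref{thm:estprobmulqcond}. The degenerate cases $D(X)=0$ or $D(Y)=0$ (so $\rXY\in\{0,\infty\}$) are covered by the same inequalities and need only a one-line remark.
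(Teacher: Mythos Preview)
Your proposal is correct and follows essentially the same approach as the paper's proof: both trace the algorithm with additive estimates of error $\eta/3$ feeding the \low/\high\ checks and multiplicative estimates feeding the ratio, and both hinge on the same threshold arithmetic around $\tfrac{3K}{3K+1}$ under the assumption $\eta<\tfrac{3}{8K}$. Your organization is a bit more unified---the paper does a five-way case split ($1/K\le\rXY\le K$; $K<\rXY\le3K$; $\rXY>3K$; and their mirror images) whereas your single observation ``if \step{5} is reached on $\mathcal E$ then $w(X),w(Y)\ge\tfrac{1}{3K+1}$'' absorbs the middle sub-cases into one line---but the underlying calculations are the same.
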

The proof of this theorem is given in Section~\ref{ss:qcompareproof}.
\comment{Check all $\Theta$'s and other Big-$O$ notation!}

\section{Property testing of probability distributions} \label{s:probdist}
We now apply our results to obtain new algorithms for a number of property testing problems. 

\begin{cor} \label{cor:unif}
	Let $\mathcal{A}^{(N)}$ be the uniform distribution on $[N]$ (i.e. $\mathcal{A}^{(N)}(i) = 1/N, i\in[N]$). Given $\pqcond$ access to a probability distribution $D$ over $[N]$, there exists an algorithm that uses $\tilde{O}(1/\epsilon)$ $\pqcond_D$ queries and decides with probability at least $2/3$ whether
	\begin{itemize}
		\item $|D-\mathcal{A}^{(N)}| = 0$ (i.e. $D=\mathcal{A}^{(N)}$), or
		\item $|D-\mathcal{A}^{(N)}| \geq \epsilon$,
	\end{itemize}
	provided that it is guaranteed that one of these is true. Here $|\cdot|$ is the $L_1$-norm\footnote{For two distributions $D_1$ and $D_2$ over $[N]$, $|D^{(1)} - D^{(2)}| = \sum_{i\in[N]} |D^{(1)}(i) - D^{(2)}(i)|$.}.
\end{cor}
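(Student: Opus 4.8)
The plan is to give a quantum analogue of the Canonne--Ron--Servedio $\pcond$ uniformity test, with the \qcompare\ routine of Theorem~\ref{thm:qcompare} playing the role of their classical \compare\ primitive. Since a $\qcompare$ call that resolves a ratio to within $\eta$ (with cutoff $K$) costs $\Theta(\sqrt{K}\log(1/\delta)/\eta)$ queries, against $\Theta(K\log(1/\delta)/\eta^2)$ for \compare, this substitution is exactly what should turn the classical $\tilde{O}(1/\epsilon^2)$ into $\tilde{O}(1/\epsilon)$. Every $\qcompare$ call will be made on two disjoint singletons $X=\{i\}$, $Y=\{j\}$, so the query set $X\cup Y$ has size $2$ and each call is a legal string of $\pqcond$ queries; a draw $i\sim D$ is $O(1)$ $\pqcond$ queries on $[N]$ by the remark after Definition~\ref{defn:qcond}, and draws from $\mathcal A^{(N)}$ are free.

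The soundness direction is immediate: if $D=\mathcal A^{(N)}$ then $\rXY=D(\{j\})/D(\{i\})=1\in[1/K,K]$ for every disjoint pair, so by part~1 of Theorem~\ref{thm:qcompare} each call $\qcompare(D,\{i\},\{j\},\eta,K,\delta)$ returns a value in $[1-\eta,1+\eta]$ (and in particular neither \low\ nor \high) with probability $\ge 1-\delta$. The algorithm will declare ``far'' only when some call returns \low, \high, or a value more than its own resolution away from $1$; running each of the at most $\tilde{O}(1/\epsilon)$ calls with confidence parameter $\delta=\Theta(\epsilon/\polylog(1/\epsilon))$ and taking a union bound keeps the total false-positive probability below $1/3$.

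For completeness I would first prove a structural lemma. Split $[N]$ dyadically by normalised weight $ND(i)$ into $O(\log(1/\epsilon))$ heavy bands $\{i: ND(i)\in(1+2^\ell\epsilon_0,\,1+2^{\ell+1}\epsilon_0]\}$ (with $\epsilon_0=\Theta(\epsilon)$), their light-side mirrors, and one lumped ``very heavy / very light'' band. The claim is that if $|D-\mathcal A^{(N)}|\ge\epsilon$ then for some ordered pair of bands, with $\eta_\bullet=\Theta(\min(1,\text{discrepancy of the pair}))$, a single round drawing $i\sim D$ and $j\sim\mathcal A^{(N)}$ lands $i$ in the heavy band and $j$ (at least one level) below it with probability $\Omega\!\big(\epsilon/(\eta_\bullet\polylog(1/\epsilon))\big)$, in which case $D(i)/D(j)$ exceeds $1$ by $\Theta(\eta_\bullet)$ and so, by Theorem~\ref{thm:qcompare}, $\qcompare(D,\{i\},\{j\},\eta_\bullet,K,\delta)$ flags with probability $\ge 1-\delta$. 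Granting this, the algorithm iterates, for each of the $O(\polylog(1/\epsilon))$ band pairs, $R_\bullet=\Theta(\eta_\bullet\polylog(1/\epsilon)/\epsilon)$ rounds of ``draw $i\sim D$, draw $j\sim\mathcal A^{(N)}$, call $\qcompare(D,\{i\},\{j\},\eta_\bullet,K,\delta)$ with $K=\Theta(1)$'', and outputs ``far'' if any call ever flags. In the good pair the probability that no round flags is $(1-\Omega(1/\polylog(1/\epsilon)))^{R_\bullet}=o(1)$, giving completeness; and the query cost of each pair is $R_\bullet\cdot\Theta(\sqrt{K}\log(1/\delta)/\eta_\bullet)=\tilde{O}(1/\epsilon)$, independent of the pair, so summing over the $O(\polylog(1/\epsilon))$ pairs yields $\tilde{O}(1/\epsilon)$ queries overall.

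The hard part is the structural lemma, specifically the probability-versus-resolution trade-off $\Prob[\text{good round}]\cdot\eta_\bullet=\tilde{\Omega}(\epsilon)$, which must hold for \emph{every} $\epsilon$-far $D$. Bucketing the $\ell_1$-deficit dyadically gives a heavy band carrying a $\tilde{\Omega}(\epsilon)$ share of the excess, hence of $D$-mass $\tilde{\Omega}(2^{-\ell})$; the subtle step is to pair it with a light target carrying enough \emph{uniform} mass while keeping the ratio gap $\eta_\bullet$ as large as the analysis permits. The reason it works is a dichotomy: if the gap is forced down to $\Theta(\epsilon)$ then both bands sit within $O(\epsilon)$ of $1/N$, and the constraint $\sum_i D(i)=1$ (a Markov-type estimate on how many elements can be much heavier than $1/N$) forces each to occupy a constant fraction of $[N]$, making $\Prob[\text{good round}]=\Omega(1)$; while a small good-round probability can only arise when some band is bounded away from $1/N$, in which case the ratio gap is $\Omega(1)$ and a coarse $\eta_\bullet$ suffices. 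Turning this dichotomy into a single clean uniform bound, together with tracking the $\polylog$ factors through the union bound, is the bulk of the work; the remaining verifications are routine.
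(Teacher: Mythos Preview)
Your high-level approach is exactly the paper's: substitute $\qcompare$ for $\compare$ in the Canonne--Ron--Servedio $\pcond$ uniformity tester, observe that all calls condition on two-element sets (so $\pqcond$ suffices), and read off the improved query count from the $\Theta(\sqrt{K}\log(1/\delta)/\eta)$ versus $\Theta(K\log(1/\delta)/\eta^2)$ gap.

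Where you diverge is in the correctness analysis. The paper's proof is a two-line reduction: because $\qcompare$ has the \emph{same input/output specification} as $\compare$ (Theorem~\ref{thm:qcompare} mirrors the guarantees of CRS's Lemma~3.1 verbatim), the entire CRS correctness proof for Algorithm~4 carries over unchanged; only the query accounting differs. You instead attempt to re-derive the structural lemma from scratch, and in doing so introduce a dyadic scheme over band pairs with variable resolutions $\eta_\bullet$. This is more elaborate than CRS's actual Algorithm~4, which uses a single fixed precision $\eta=\Theta(\epsilon)$ and a $\tilde O(1)$ number of $\compare$ calls; plugging $\qcompare$ into that algorithm directly already gives $\tilde O(1/\epsilon)$ without any per-level trade-off. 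Your probability-versus-resolution dichotomy is the right intuition for why the CRS lemma works, but you acknowledge it is not fully proved, and it is strictly unnecessary here: the cleanest route is to cite CRS for correctness and only redo the query count.

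In short, the proposal is sound in spirit and would succeed if carried out, but it re-proves machinery that the paper simply inherits by black-boxing CRS.
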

\begin{proof}
	We replace the calls to $\compare$ with the corresponding calls to $\qcompare$ in Algorithm 4 of~\cite{canonne_testing_2015}. For this method, calls to $\qcompare$ only require conditioning over pairs of elements, and hence the $\pqcond_D$ oracle may be used instead of $\qcond_D$.
\end{proof}

\textit{Remark:} The corresponding classical algorithm (Algorithm 4 in~\cite{canonne_testing_2015}) uses $\tilde{O}(1/\epsilon^2)$ $\pcond_D$ queries. The authors also show  (Section 4.2 of~\cite{canonne_testing_2015}) that any classical algorithm making $\cond_D$ queries must use $\Omega(1/\epsilon^2)$ queries to solve this problem with bounded probability. Thus the above quantum algorithm is quadratically more efficient than any classical $\cond$ algorithm.

\begin{cor}
	Given the full specification of a probability distribution $D^*$ (i.e. a \emph{known} distribution) and $\pqcond$ access to a probability distribution $D$, both over $[N]$, there exists an algorithm that uses $\tilde{O}\Big(\frac{\log^3 N}{\epsilon^3}\Big)$ $\pqcond_{D}$ queries and decides with probability at least $2/3$ whether
	\begin{itemize}
		\item $|D-D^*| = 0$ (i.e. $D=D^*$), or
		\item $|D-D^*| \geq \epsilon$,
	\end{itemize}
	provided that it is guaranteed that one of these is true.
\end{cor}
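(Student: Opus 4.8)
The plan is to follow the same template as the proof of Corollary~\ref{cor:unif}: take the $\pcond$-based algorithm of~\cite{canonne_testing_2015} for testing identity to a known distribution (the one achieving $\tilde{O}\big((\log N/\epsilon)^4\big)$ $\pcond_D$ queries) and replace every invocation of $\compare(D,X,Y,\eta,K,\delta)$ by the corresponding invocation of $\qcompare(D,X,Y,\eta,K,\delta)$ (Algorithm~\ref{alg:qcompare}), leaving all other steps---drawing samples from $D$, the bucketing of $[N]$ induced by the known $D^*$, and the final decision rule---untouched. First I would check that this substitution is legitimate at the level of the oracle: in that algorithm the sets passed to $\compare$ are pairs of singletons, so $|X\cup Y|=2$, and the only other oracle operation is drawing a sample from $D$, which is a query on the set $[N]$; hence every query conforms to the $\pqcond$ restriction $|S|\in\{2,N\}$, and $\pqcond_D$ may be used in place of $\qcond_D$.

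Next I would verify that the preconditions of $\qcompare$ are met and that its output is at least as informative as that of $\compare$. Disjointness and $D(X\cup Y)>0$ carry over verbatim; the only genuinely new requirement is $\eta\in(0,\tfrac{3}{8K})$, which in the parameter regime dictated by this test is satisfied because the precision requested of $\compare$ already obeys $\eta=\tilde{O}(1/K)$, so shrinking $\eta$ to $\min\{\eta,\,c/K\}$ for a suitable constant $c$ costs at most a constant factor and only strengthens the estimate. For the output, Theorem~\ref{thm:qcompare} guarantees that whenever $\rXY\in[1/K,K]$ the returned value is a $(1\pm\eta)$-approximation to $\rXY$ with probability at least $1-\delta$, exactly as for $\compare$, while outside that range the routine may instead return \low\ or \high---which is precisely the ``ratio out of range'' outcome that the analysis of~\cite{canonne_testing_2015} already anticipates and handles. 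Consequently the correctness argument of~\cite{canonne_testing_2015} goes through with $\qcompare$ substituted for $\compare$, so the modified algorithm still decides correctly with probability at least $2/3$ (a $\log(1/\delta)$-type boosting of the per-call confidence, absorbed into the $\tilde{O}$, handles the union bound over all calls).

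It then remains to recompute the query complexity. In~\cite{canonne_testing_2015} the total cost is (number of $\compare$ calls) times the per-call cost $\Theta\big(K\log(1/\delta)/\eta^2\big)$, summing to $\tilde{O}\big((\log N/\epsilon)^4\big)$; each such call is now a $\qcompare$ call costing $\Theta\big(\sqrt{K}\log(1/\delta)/\eta\big)$ $\pqcond_D$ queries, so the per-call cost improves by a factor $\Theta(\sqrt{K}/\eta)=\tilde{\Theta}(\log N/\epsilon)$ for the parameters used there, and since the $\compare$ calls dominate the query count the overall complexity falls to $\tilde{O}\big((\log N/\epsilon)^3\big)=\tilde{O}(\log^3 N/\epsilon^3)$. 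The hard part here is not conceptual but bookkeeping: one must confirm that the \low/\high outcomes of $\qcompare$ are routed through the downstream decision rule in exactly the way the original analysis treats out-of-range $\compare$ outputs, and that the $\eta<3/(8K)$ precondition (hence the factor $\tilde{\Theta}(\log N/\epsilon)$ above) holds at every call site; granting that, the speed-up is an immediate consequence of replacing the $\Theta(K/\eta^2)$ cost of $\compare$ with the $\Theta(\sqrt{K}/\eta)$ cost of $\qcompare$.
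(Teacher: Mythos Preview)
Your proposal is correct and follows essentially the same approach as the paper's own proof, which consists of a single sentence: ``We replace the calls to $\compare$ with the corresponding calls to $\qcompare$ in Algorithm~5 of~\cite{canonne_testing_2015}.'' You have simply fleshed out the bookkeeping (oracle-compatibility of the query sets, handling of \low/\high\ outputs, the $\eta<3/(8K)$ precondition, and the per-call cost recomputation) that the paper leaves implicit.
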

\begin{proof}
	We replace the calls to $\compare$ with the corresponding calls to $\qcompare$ in Algorithm 5 of~\cite{canonne_testing_2015}.
\end{proof}

\textit{Remark:} The corresponding classical algorithm (Algorithm 5 in~\cite{canonne_testing_2015}) uses $\tilde{O}\Big(\frac{\log^4 N}{\epsilon^4}\Big)$ $\pcond_D$ queries.

\begin{cor} \label{cor:unknown}
	Given $\pqcond$ access to probability distributions $D^{(1)}$ and $D^{(2)}$ over $[N]$, there exists an algorithm that decides, with probability at least $2/3$, whether
	\begin{itemize}
		\item $|D^{(1)}-D^{(2)}| = 0$ (i.e. $D^{(1)}=D^{(2)}$), or
		\item $|D^{(1)}-D^{(2)}| \geq \epsilon$,
	\end{itemize}
	provided that it is guaranteed that one of these is true. The algorithm uses $\tilde{O}\Big(\frac{\log^4 N}{\epsilon^{14}}\Big)$ $\pqcond_{D^{(1)}}$ and $\pqcond_{D^{(2)}}$ queries.
\end{cor}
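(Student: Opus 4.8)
Following the pattern of the preceding corollaries, the plan is to take the classical $\pcond$ algorithm of~\cite{canonne_testing_2015} for testing equality of two unknown distributions (the one achieving the complexity $\tilde{O}[(\log^2 N/\epsilon^7)^3]$ quoted in Table~\ref{tbl:resultssummary}) and replace every invocation of $\compare$ by a call to $\qcompare$ (Algorithm~\ref{alg:qcompare}), and every step that estimates the weight of a pair or of the whole domain by a call to $\qestprobadd$ or $\qestprobmul$ (Theorems~\ref{thm:estprobaddqcond},~\ref{thm:estprobmulqcond}). The classical algorithm only ever conditions on sets of cardinality $2$ or $N$, so $\pqcond$ access to $D^{(1)}$ and $D^{(2)}$ is sufficient: each $\qcond$ query issued by $\qcompare$ on $X\cup Y$ has $|X\cup Y|\in\{2,N\}$ and is therefore a $\pqcond$ query, and these are split between $\pqcond_{D^{(1)}}$ and $\pqcond_{D^{(2)}}$ in the same proportion as the classical algorithm splits its $\cond$ queries.

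\textbf{Correctness.} The key point is that Theorem~\ref{thm:qcompare} endows $\qcompare$ with exactly the same input--output contract as $\compare$: when $1/K\le \rXY\le K$ it returns, with probability $\ge 1-\delta$, a multiplicative $(1\pm\eta)$-estimate of $\rXY$, and when $\rXY$ is outside $[1/K,K]$ it returns either such an estimate or the correct one of the flags \low~/~\high. These flags play precisely the role of $\compare$'s ``$X$ far heavier than $Y$'' / ``$Y$ far heavier than $X$'' reports, and $\qestprobadd$, $\qestprobmul$ reproduce the guarantees of their classical counterparts. Consequently the correctness argument of~\cite{canonne_testing_2015} --- a union bound over all sub-routine calls together with the case analysis on whether each queried ratio lies in range --- transfers verbatim, giving success probability $\ge 2/3$; the constant can be boosted by standard repetition, or via Lemma~\ref{lem:probabilitydependence}, at a cost of logarithmic factors absorbed into $\tilde{O}$.

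\textbf{Query count.} It remains to re-tally queries. The classical tester makes a fixed (polynomial-times-polylogarithmic) family of $\compare$ calls with parameters $(\eta_j,K_j,\delta_j)$, the $j$-th costing $\Theta(K_j\log(1/\delta_j)/\eta_j^2)$ classical $\cond$ queries, with overall cost $\tilde{O}(q^3)$ where $q\coloneqq \log^2 N/\epsilon^7$. By Theorem~\ref{thm:qcompare} the matching $\qcompare$ call costs only $\Theta(\sqrt{K_j}\log(1/\delta_j)/\eta_j)$ $\pqcond$ queries --- a square root in $K_j$ and one power fewer in $1/\eta_j$ --- while the non-$\compare$ steps, re-implemented through $\qestprobadd$/$\qestprobmul$, can only get cheaper. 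Summing with these per-call costs turns $\tilde{O}(q^3)$ into $\tilde{O}(q^2)=\tilde{O}(\log^4 N/\epsilon^{14})$, matching the stated bound.

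\textbf{Main obstacle.} Two points need care. First, $\qcompare$ carries the extra precondition $\eta\in(0,\tfrac{3}{8K})$ that $\compare$ does not, so one must verify that every $\compare$ call in the classical tester already satisfies $\eta_j<\tfrac{3}{8K_j}$, or else shrink $\eta_j$ (which only sharpens the estimate it returns) or mildly enlarge $K_j$, checking that this re-parametrisation does not inflate the budget beyond $\tilde{O}(q^2)$. Second, and more laboriously, one must track through the loops and repetitions of the classical algorithm which factors of $q$ originate from a $\compare$ call's $K$ or $\eta$ --- these acquire a square root, respectively drop a power, under $\qcompare$ --- and which originate from outer repetition counts, which are unchanged, so as to confirm that the exponent of $q$ falls from exactly $3$ to exactly $2$ rather than to some value in between. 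I expect this accounting, rather than any new conceptual ingredient, to be the bulk of the work.
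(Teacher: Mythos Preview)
Your proposal is correct and follows essentially the same approach as the paper: replace every call to $\compare$ in Algorithm~9 of~\cite{canonne_testing_2015} by $\qcompare$, with correctness carrying over verbatim because of the identical input--output contract. The paper's proof is in fact terser than yours; the only additional content it supplies is that the accounting you flag as the ``main obstacle'' is handled by noting that the subroutine \textsc{Estimate-Neighborhood} (Algorithm~2 in~\cite{canonne_testing_2015}) drops from $\tilde{O}\big(\log(1/\delta)/(\kappa^2\eta^4\beta^3\delta^2)\big)$ to $\tilde{O}\big(\log(1/\delta)/(\kappa\eta^3\beta^2\delta)\big)$ $\pqcond$ queries, which is exactly the per-parameter square-root/one-power-down saving you anticipate.
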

\begin{proof}
	We replace the calls to $\compare$ with the corresponding calls to $\qcompare$ in Algorithm 9 of~\cite{canonne_testing_2015}. As a by-product of this process, the function \textsc{Estimate-Neighborhood} (Algorithm 2 in~\cite{canonne_testing_2015}), using $\tilde{O}\Big(\frac{\log(1/\delta)}{\kappa^2 \eta^4 \beta^3 \delta^2}\Big)$ $\pcond$ queries, is replaced by an algorithm \textsc{QEstimate-Neighborhood}, which uses $\tilde{O}\Big(\frac{\log(1/\delta)}{\kappa \eta^3 \beta^2 \delta}\Big)$ $\pqcond$ queries.
\end{proof}

\textit{Remark:} This is to be compared with Algorithm 9 in~\cite{canonne_testing_2015}, which uses $\tilde{O}\Big(\frac{\log^6 N}{\epsilon^{21}}\Big)$ $\pcond_{D^{(1)}}$ and $\pcond_{D^{(2)}}$ queries.

\begin{cor}
	Given $\pqcond$ access to a probability distribution $D$ over $[N]$, there exists an algorithm that uses $\tilde{O}(1/\epsilon^{13})$ queries and outputs a value $\hat{d}$ such that $|\hat{d}-|D-\mathcal{A}^{(N)}|| = O(\epsilon)$.
\end{cor}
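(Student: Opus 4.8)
The plan is to follow the same template as the preceding corollaries: take the classical algorithm for estimating the distance to uniformity from~\cite{canonne_testing_2015} (the $\pcond_D$ algorithm that achieves the $\tilde{O}(1/\epsilon^{20})$ bound recorded in Table~\ref{tbl:resultssummary}) and replace every invocation of $\compare$ by the corresponding invocation of $\qcompare$ (Algorithm~\ref{alg:qcompare}), together with the analogous quantum upgrades of the auxiliary subroutines — replacing $\textsc{Estimate-Neighborhood}$ by the $\textsc{QEstimate-Neighborhood}$ routine introduced for Corollary~\ref{cor:unknown}, and replacing the additive and multiplicative weight estimators used internally by $\qestprobadd$ and $\qestprobmul$. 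Since all the $\compare$ (and neighborhood-estimation) calls in that algorithm condition only on pairs of elements or on the whole domain $[N]$, the $\pqcond_D$ oracle suffices in place of $\qcond_D$.

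First I would check correctness. By Theorem~\ref{thm:qcompare}, on input $(D,X,Y,\eta,K,\delta)$ the routine $\qcompare$ behaves, in the regimes invoked by the distance-estimation algorithm, exactly as the classical $\compare$ does: when $1/K\le r_{X,Y}\le K$ it returns $\tilde{r}_{X,Y}\in[1-\eta,1+\eta]r_{X,Y}$ with probability at least $1-\delta$, and in the extreme regimes it either returns a good multiplicative estimate or flags $\low$/$\high$, which the calling algorithm treats identically to the corresponding extreme verdicts of $\compare$. Hence the output distribution of the modified algorithm coincides, up to the failure probabilities already accounted for, with that of the classical one, so it still outputs $\hat{d}$ with $|\hat{d}-|D-\mathcal{A}^{(N)}||=O(\epsilon)$ with probability at least $2/3$; the success probability can be amplified via Lemma~\ref{lem:probabilitydependence} if desired.

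Next I would re-derive the query count. The classical algorithm fixes its internal parameters (the range parameter $K$, the comparison accuracy $\eta$, and the neighborhood parameters $\kappa,\eta,\beta,\delta$) as prescribed powers of $\epsilon$, and its $\tilde{O}(1/\epsilon^{20})$ bound is obtained by multiplying the (structural) number of repetitions of each phase by the per-call cost of $\compare$, namely $\Theta(K\log(1/\delta)/\eta^2)$, and of $\textsc{Estimate-Neighborhood}$, namely $\tilde{O}(\log(1/\delta)/(\kappa^2\eta^4\beta^3\delta^2))$. Replacing these by $\qcompare$ (per-call cost $\Theta(\sqrt{K}\log(1/\delta)/\eta)$, Theorem~\ref{thm:qcompare}) and by $\textsc{QEstimate-Neighborhood}$ (cost $\tilde{O}(\log(1/\delta)/(\kappa\eta^3\beta^2\delta))$, as used for Corollary~\ref{cor:unknown}) lowers the per-call exponents — the $K$-exponent halves and the $\eta$-, $\kappa$-, $\beta$-, $\delta$-exponents each drop — while leaving the number of phases and buckets unchanged. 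Carrying the same $\epsilon$-parametrisation through these reduced costs and multiplying through collapses the overall exponent from $20$ to $13$, giving $\tilde{O}(1/\epsilon^{13})$ $\pqcond_D$ queries.

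The main obstacle I anticipate is purely bookkeeping: one must reopen the analysis of the distance-from-uniformity algorithm of~\cite{canonne_testing_2015}, isolate exactly which internal parameters are driven to which powers of $\epsilon$, verify that every place a $\compare$ call or a neighborhood estimate is invoked lies in a regime covered by Theorem~\ref{thm:qcompare} (so the $\low$/$\high$ branches are handled correctly) and that no conditioning set other than the full set has cardinality exceeding $2$, and then confirm that the product of the reduced exponents indeed yields $13$ rather than an adjacent value. No new algorithmic idea is required beyond the substitutions already validated in the previous corollaries.
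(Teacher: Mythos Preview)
Your approach is essentially the same as the paper's: take the classical distance-from-uniformity algorithm of~\cite{canonne_testing_2015} and replace every $\compare$ call by $\qcompare$, together with the induced quantum upgrade of the main auxiliary subroutine. However, you have misidentified that subroutine. The distance-from-uniformity algorithm is Algorithm~11 of~\cite{canonne_testing_2015}, and the heavy subroutine it calls is \textsc{Find-Reference} (Algorithm~12 there), not \textsc{Estimate-Neighborhood} (Algorithm~2). The paper's proof accordingly produces a \textsc{QFind-Reference} routine, which drops from $\tilde{O}(1/\kappa^{20})$ to $\tilde{O}(1/\kappa^{13})$ $\pqcond$ queries; the parameters $(\kappa,\eta,\beta,\delta)$ you quote and the cost formula $\tilde{O}(\log(1/\delta)/(\kappa\eta^3\beta^2\delta))$ belong to the equivalence-testing corollary, not to this one. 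So your bookkeeping plan, as written, would be tracking the wrong internal parameters and would not actually verify the exponent~$13$.

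A second small omission: Algorithm~11 also makes direct $\samp_D$ queries, which the paper explicitly replaces by $\pqcond_D$ queries with query set $[N]$. You should mention this (it is trivial, but it is needed to ensure only the $\pqcond$ oracle is used). Once you swap \textsc{Estimate-Neighborhood} for \textsc{Find-Reference} and add this remark, your argument coincides with the paper's.
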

\begin{proof}
	We replace the calls to $\compare$ with the corresponding calls to $\qcompare$ in Algorithm 11 of~\cite{canonne_testing_2015}. In addition, we trivially replace all queries to the $\samp_D$ oracle with queries to $\pqcond_D$ with query set $[N]$. As a by-product of this process, the function \textsc{Find-Reference} (Algorithm 12 in~\cite{canonne_testing_2015}), using $\tilde{O}(1/\kappa^{20})$ $\pcond$ and $\samp$ queries, is replaced by an algorithm \textsc{QFind-Reference}, which uses $\tilde{O}(1/\kappa^{13})$ $\pqcond$ queries.
\end{proof}

\textit{Remark:} The corresponding classical algorithm (Algorithm 11 in~\cite{canonne_testing_2015}) uses $\tilde{O}(1/\epsilon^{20})$ queries.

\section{Property testing of Boolean functions}\label{s:boolean}

The results in Section~\ref{s:probdist} can be applied to test properties of Boolean functions. One of the more important challenges in field of cryptography is to determine whether or not a given boolean function is `balanced'. We give an algorithm to solve this problem with a constant number of $\pqcond$ queries.

Consider a function $f: \{0,1\}^n \rightarrow \{0,1\}^m$, for $n,m\in \mathbb{N}$ with $n\geq m$. If $m=1$, we might consider the following problem:

\begin{problem}[Constant-balanced problem]
	Given $f:\{0, 1\}^n\rightarrow \{0, 1\}$, decide whether
	\begin{itemize}
		\item $f$ is a balanced function, i.e. $|\{x\in \{0,1\}^n : f(x) = 0\}|/2^n = |\{x\in \{0,1\}^n : f(x) = 1\}|/2^n = \frac{1}{2}$, or
		\item $f$ is a constant function, i.e. $f(x)=0 \hspace{0.5em} \forall x\in \{0,1\}^n$ or $f(x)=1 \hspace{0.5em} \forall x\in \{0,1\}^n$,
	\end{itemize}
	provided that it is guaranteed that $f$ satisfies one of these conditions.	
\end{problem}

With standard quantum oracle access to $f$, this problem can be solved exactly with one query, through use of the Deutsch-Jozsa algorithm~\cite{cleve1998quantum,deutsch_rapid_1992}. Consider the following extension of this problem:

\begin{problem} \label{problem:dj2}
	Given $f:\{0, 1\}^n\rightarrow \{0, 1\}$, write $F_i \coloneqq |\{x\in \{0,1\}^n : f(x) = i\}|/2^n$. Decide whether
	\begin{itemize}
		\item $f$ is a balanced function, i.e. $F_0 = F_1 = \frac{1}{2}$, or
		\item $f$ is $\epsilon$-far from balanced, i.e. $|F_0-\frac{1}{2}| + |F_1-\frac{1}{2}| = 2|F_0-\frac{1}{2}| \geq \epsilon$,
	\end{itemize}
	provided that it is guaranteed that $f$ satisfies one of these conditions.	
\end{problem}

This problem can be solved with bounded probability by querying $f$ several times.
In addition, it can be solved using the $\qsamp$ oracle.
To understand how this works, set $T=2^n, N=2, O_D=f$ in Definition~\ref{defn:qsamp} so that $D(i) = F_i$. Then Theorem~\ref{thm:estprobtwo} can be used to estimate $D(0) = F_0$ to error $\epsilon/3$ with probability $1-\delta$ using $O(\log(1/\delta)/\epsilon)$ queries.

Now we consider an even more general problem:

\begin{problem}
	Given $f:\{0, 1\}^n\rightarrow \{0, 1\}^m$, write $F_i \coloneqq |\{x\in \{0,1\}^n : f(x) = i\}|/2^n$. Decide whether
	\begin{itemize}
		\item $f$ is a balanced function, i.e. $F_i = \frac{1}{2^m} \hspace{1em}\forall i\in\{0, 1\}^m$, or
		\item $f$ is $\epsilon$-far from balanced, i.e. $\sum_{i\in\{0, 1\}^m} |F_i-\frac{1}{2^m}| \geq \epsilon$,
	\end{itemize}
	provided that it is guaranteed that $f$ satisfies one of these conditions.	
\end{problem}

By allowing $\pqcond$ access to $f$, this can be solved in $\tilde{O}(1/\epsilon)$ queries. In what sense do we allow $\pqcond$ access to $f$? We relate $f$ to a probability distribution by setting $N=2^m$, $D_{[N]}(i) = F_i$, and using the definition of $D_S(i)$ given at the start of Section~\ref{s:prelim}. The problem is then solved by an application of the algorithm presented in Corollary~\ref{cor:unif}.


\section{Quantum Spectrum Testing} \label{s:qspectrumtesting}

Recall the quantum cloud-based computation scenario presented in Section~\ref{ss:results}. 
It is easy to see that for any basis $\I$, $D_{[n]}^{\iden/n,\I} = \mathcal{A}^{(n)}$, where $\mathcal{A}^{(n)}$ is the uniform distribution over $[n]$. Then for any state $\rho$,
\begin{itemize}
	\item if $\|\rho-\iden/n\|_1 = 0$, then $\left| D_{[n]}^{\rho, \I} - \mathcal{A}^{(n)} \right| = 0$ for any basis $\I$;
	\item if $\|\rho-\iden/n\|_1 \geq \epsilon$, perhaps we can choose a basis $\I$ such that $\left| D_{[n]}^{\rho, \I} - \mathcal{A}^{(n)} \right| \geq \nu(\epsilon, n)$, for some function $\nu$.
\end{itemize}
Corollary~\ref{cor:unif}, with distance parameter $\nu(\epsilon,n)$, could then be used to distinguish between these two options.

As the first case above is immediate, we henceforth assume that $\|\rho-\iden/n\|_1 \geq \epsilon$. In order to simplify the analysis, we assume that $n$ is even, let $\Delta = \rho - \iden / n$, and introduce
\begin{equation}
	\delta^{(\I)} \coloneqq \left| D_{[n]}^{\rho, \I} - \mathcal{A}^{(n)} \right| = \sum_{i\in[n]} |\bra{b_i} \Delta \ket{b_i}|.
\end{equation}

Let $\tilde{\I} = \left\{ \ket{\tilde{b}_i} \right\}_{i\in[n]}$ be the eigenbasis of $\Delta$, and let $d_i \coloneqq \bra{\tilde{b}_i} \Delta \ket{\tilde{b}_i}$, $i \in [n]$ be the eigenvalues. Thus, $\Delta = \sum_{i \in [n]} d_i \proj{\tilde{b}_i}$. Note that $\tr \Delta = \sum_{i\in[n]} d_i = 0$, and also $\eta \coloneqq \|\rho-\iden/n\|_1 = \|\Delta\|_1 = \sum_{i\in[n]} |d_i| \geq \epsilon$.

Now suppose we choose a basis $\I = \{ \ket{b_i} \}_{i \in [n]}$ uniformly at random, i.e. we choose $W\in \LieG{U}(n)$ uniformly at random according to the Haar measure, and set $\ket{b_i} = W\ket{\tilde{b}_i}$. Then
\begin{equation}
	\delta^{(\I)} = \sum_{i \in [n]} |\bra{b_i} \Delta \ket{b_i}| = \sum_{i \in [n]} \left| \sum_{j \in [n]} |W_{ji}|^2 d_j \right|
\end{equation}

The triangle inequality then gives
\begin{align}
	\delta^{(\I)} \geq \Bigg| \sum_{j\in[n]} \Bigg( \sum_{i\in[n]} |W_{ji}|^2 \Bigg) d_j \Bigg| = \Bigg| \sum_{j\in[n]} d_j \Bigg| = 0; \eqgaph \delta^{(\I)} \leq \sum_{j\in[n]} \Bigg( \sum_{i\in[n]} |W_{ji}|^2 \Bigg) |d_j| = \eta. \label{eqn:Edeltabounds}
\end{align}
	
Let $v^{(i)}_j = |W_{ji}|^2$, introduce the vector $V^{(i)} = (v^{(i)}_0, \dots, v^{(i)}_{n-1})$, and write $d=(d_0, \dots, d_{n-1})$. Then
\begin{equation}
	\delta^{(\I)} = \sum_{i\in[n]} |V^{(i)} \cdot d|.
\end{equation}

We now make use of Sykora's theorem~\cite{sykora_quantum}, which states that if $W$ is chosen uniformly at random according to the Haar measure on $\LieG{U}(n)$, then the vector $V^{(i)}$, for any $i$, is uniformly distributed over the probability simplex
\begin{equation}
	T_n = \{ (v_0, \dots, v_{n-1}) : v_i \in [0, 1], \, \textstyle \sum_{i\in [n]} v_i = 1 \}.
\end{equation}

Since all of the $V^{(i)}$'s have the same distribution, we see that
\begin{equation}
	\E \left( \delta^{(\I)} \right) = n \E (|V\cdot d|),
\end{equation}
where $V$ is a generic $V^{(i)}$.

The following lemma allows us to relate a lower bound on $\E \left( \delta^{(\I)} \right)$ to a lower bound on $\Prob[\delta^{(\I)} \geq \lambda]$, for some $\lambda$.

\begin{lem} \label{lem:exptoprob}
	\begin{equation}
		\Prob \left[ \delta^{(\I)} \geq \lambda \right] \geq \frac{1}{\eta} \left( \E \left( \delta^{(\I)} \right) - \lambda^2 \right)
	\end{equation}
\end{lem}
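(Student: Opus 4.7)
The plan is to prove the lemma by a reverse-Markov (layer-cake) argument, exploiting that $\delta^{(\I)}$ is a bounded nonnegative random variable. Crucially, the almost-sure upper bound $\delta^{(\I)} \leq \eta$ has already been established in equation~\eqref{eqn:Edeltabounds}, so the a priori range $\delta^{(\I)} \in [0,\eta]$ is free. I would split the expectation at the threshold $\lambda$:
\begin{equation*}
\E(\delta^{(\I)}) = \E\bigl[\delta^{(\I)} \mathds{1}_{\delta^{(\I)} < \lambda}\bigr] + \E\bigl[\delta^{(\I)} \mathds{1}_{\delta^{(\I)} \geq \lambda}\bigr],
\end{equation*}
bound the integrand by $\lambda$ on the first event and by the hard cap $\eta$ on the second, and drop $\Prob[\delta^{(\I)} < \lambda] \leq 1$ to obtain $\E(\delta^{(\I)}) \leq \lambda + \eta \Prob[\delta^{(\I)} \geq \lambda]$. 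Rearranging yields the clean bound $\Prob[\delta^{(\I)} \geq \lambda] \geq (\E(\delta^{(\I)}) - \lambda)/\eta$.

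To match the stated form with $\lambda^2$ in place of $\lambda$, I expect one of two scenarios. Either the lemma will be invoked downstream in the regime $\lambda \geq 1$ (recall $\eta \leq 2$, so $\lambda \in [0,2]$ is the natural range), in which case $\lambda^2 \geq \lambda$ and the printed inequality is weaker than, hence implied by, the one derived above; or the exponent $2$ is a typographical slip for $1$, and the same one-line partition argument suffices verbatim. Either way, the analytic content is captured by the split above.

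The main obstacle I anticipate is really just the $\lambda$ versus $\lambda^2$ bookkeeping; the underlying probabilistic inequality is a one-step consequence of the almost-sure bound $\delta^{(\I)} \leq \eta$, which is itself the triangle-inequality calculation performed immediately before the lemma statement. Consequently, no deeper tool (second-moment estimate, Paley--Zygmund, etc.) appears necessary, and I would write the argument out directly as the threshold split followed by the rearrangement, flagging the exponent issue at the end for the reader.
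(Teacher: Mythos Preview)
Your threshold-split argument is exactly the skeleton the paper uses, but you diverge at the crucial step and then misdiagnose the discrepancy. The paper does \emph{not} bound the low-tail piece by $\lambda\,\Prob[\delta^{(\I)}<\lambda]\le\lambda$; instead it writes the expectation as an integral against the density $p(\mu)$ and bounds
\[
\int_0^\lambda \mu\, p(\mu)\,d\mu \;\le\; \int_0^\lambda \lambda\cdot 1\,d\mu \;=\;\lambda^2,
\]
i.e.\ it uses $\mu\le\lambda$ together with the pointwise bound $p(\mu)\le 1$ on the density itself. That extra density bound is precisely what produces the $\lambda^2$ rather than your $\lambda$.

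Your two proposed resolutions of the $\lambda$ vs.\ $\lambda^2$ mismatch are both incorrect. The lemma is invoked immediately afterwards with $\lambda=\min(1,\epsilon)/(\sqrt{8}\,n^{1/4})$, which is small, so $\lambda^2<\lambda$ and the stated inequality is strictly \emph{stronger} than the one you derive; it is not a typo, and it is not slack that disappears in the regime of use. Concretely, the $\lambda^2$ form is what allows the choice $\lambda\sim\epsilon/n^{1/4}$ while still keeping $\Prob[\delta^{(\I)}\ge\lambda]\gtrsim 1/\sqrt{n}$, yielding the headline $\tilde O(n^{3/4}/\epsilon)$ query count; your $\lambda$ version would force $\lambda\sim\epsilon/\sqrt{n}$ for the same tail probability and degrade the final bound to $\tilde O(n/\epsilon)$. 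So the exponent is load-bearing, and your proposal as written does not establish the lemma. (It is fair to observe, separately, that the paper never justifies $p(\mu)\le 1$; but that is the ingredient you are missing, not a cosmetic issue.)
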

\begin{proof}
	Let $p=p(\mu)$ be the probability density function for $\delta^{(\I)}$. As noted in \eqn{Edeltabounds}, $0 \leq \delta^{(\I)} \leq \eta$. Thus, for $\lambda \in [0, \eta]$ we can write
	\begin{align}
		\E \left( \delta^{(\I)} \right) &= \int_0^\eta \mu p(\mu) \hspace{0.2em} d\mu \\
		&= \int_0^\lambda \mu p(\mu) \hspace{0.2em} d\mu + \int_\lambda^\eta \mu p(\mu) \hspace{0.2em} d\mu \\
		&\leq \int_0^\lambda \lambda \cdot 1 \hspace{0.2em} d\mu + \int_\lambda^\eta \eta p(\mu) \hspace{0.2em} d\mu \\
		&= \lambda^2 + \eta \Prob \left[ \delta^{(\I)} \geq \lambda \right].
	\end{align}
	
	Rearranging the inequality gives the result.
\end{proof}

\textit{Remark:} One might consider using Chebyshev's inequality~\cite{grimmett2014probability} to place a bound on $\Prob[\delta^{(\I)} \geq \lambda]$. The above lemma achieves a tighter bound, however, which is necessary for the remainder of this section.

We now write $\E(|V\cdot d|)$ as an integral over the probability simplex $T_n$. We have
\begin{equation}
	\E(f(V)) = \int_{T_n} f(V) dV \coloneqq (n-1)! \int_{v_0=0}^1 \cdots \int_{v_{n-1}=0}^1 \delta(1-\textstyle \sum_{i\in[n]} v_i) f(V) \hspace{0.2em} dv_0 \cdots dv_{n-1} \label{eqn:simplexintegral}
\end{equation}
where $dV= (n-1)! \hspace{0.2em} \delta(1-\sum_{i\in [n]} v_i) \hspace{0.2em} dv_0 \cdots dv_{n-1}$ is the normalised measure on $T_n$, defined so that $\E(1) = 1$.

Now note that the integral expression for $\E(|V\cdot d|) = \E(|v_0 d_0 + \cdots v_{n-1} d_{n-1}|)$ is completely symmetric in the $v_i$'s (and hence in the $d_i$'s). Thus, if $\sigma$ is a permutation on $[n]$, we have that
\begin{equation}
	\E(|v_0 d_0 + \cdots v_{n-1} d_{n-1}|) = \E(|v_0 d_{\sigma(0)} + \cdots v_{n-1} d_{\sigma(n-1)}|).
\end{equation}

Using this observation, we can write
\begin{align}
	&\hspace{1.5em} \E(|v_0 d_0 + \cdots v_{n-1} d_{n-1}|) \\
	&= \frac{1}{n} \left[ \E(|v_0 d_{\sigma(0)} + \cdots v_{n-1} d_{\sigma(n-1)}|) + \E(|v_0 d_{\sigma(1)} + \cdots v_{n-1} d_{\sigma(0)}|) + \right. \\
	& \hspace{2cm} \left. + \E(|v_0 d_{\sigma(2)} + \cdots v_{n-1} d_{\sigma(1)}|) + \cdots + \E(|v_0 d_{\sigma(n-1)} + \cdots v_{n-1} d_{\sigma(n-2)}|) \right] \\
	&= \frac{1}{n} \left[ \E(|v_0 d_{\sigma(0)} + \cdots v_{n-1} d_{\sigma(n-1)}|) + \E(|-v_0 d_{\sigma(1)} - \cdots v_{n-1} d_{\sigma(0)}|) + \right. \\
	& \hspace{2cm} \left. + \E(|v_0 d_{\sigma(2)} + \cdots v_{n-1} d_{\sigma(1)}|) + \cdots + \E(|-v_0 d_{\sigma(n-1)} - \cdots v_{n-1} d_{\sigma(n-2)}|) \right] \label{eqn:symmplusminus} \\
	&\geq \frac{1}{n} \E\left[|v_0 (d_{\sigma(0)} - d_{\sigma(1)} + \cdots - d_{\sigma(n-1)}) + v_1 (d_{\sigma(1)} - d_{\sigma(2)} + \cdots - d_{\sigma(0)}) \right. \\
	& \left. \hspace{2cm} + v_2 (d_{\sigma(2)} - d_{\sigma(3)} + \cdots - d_{\sigma(1)}) + \cdots v_{n-1} (d_{\sigma(n-1)} - d_{\sigma(0)} + \cdots - d_{\sigma(n-2)})|\right] \label{eqn:symmtriangle} \\
	&= \frac{1}{n}|d_{\sigma(0)}-d_{\sigma(1)}+d_{\sigma(2)}-\cdots -d_{\sigma(n-1)}| \hspace{0.5em} \E(|v_0-v_1+v_2-\cdots -v_{n-1}|),
\end{align}
where in \eqn{symmplusminus} minus signs are added inside every other expectation (note that $n$ is even), and \eqn{symmtriangle} is derived using the triangle inequality.

Since $\sigma$ was an arbitrary permutation, we can instead write
\begin{equation}
	\E(|V\cdot d|) \geq \frac{1}{n} \left[ \max_{\sigma \in Sym([n])} |d_{\sigma(0)}-d_{\sigma(1)}+d_{\sigma(2)}-\cdots -d_{\sigma(n-1)}| \right] \hspace{0.5em} \E(|v_0-v_1+v_2-\cdots -v_{n-1}|),
\end{equation}
where $Sym([n])$ symmetric group on $[n]$, and hence
\begin{equation}
	\E \left( \delta^{(\I)} \right) \geq M^{(d)} E_n,
\end{equation}
where
\begin{align}
M^{(d)} &\coloneqq \max_{\sigma \in Sym([n])} |d_{\sigma(0)}-d_{\sigma(1)}+d_{\sigma(2)}-\cdots -d_{\sigma(n-1)}|, \label{eqn:Md} \\
E_n &\coloneqq \E(|v_0-v_1+v_2-\cdots -v_{n-1}|). \label{eqn:En}
\end{align}

Evaluation of $M^{(d)}$ and $E_n$ is carried out in Sections~\ref{ss:Md} and~\ref{ss:En}, where we find that $M^{(d)} \geq \frac{1}{2}\eta$ and $E_n \geq \frac{1}{2\sqrt{n}}$. Hence
\begin{equation}
	\E \left( \delta^{(\I)} \right) \geq \frac{\eta}{4\sqrt{n}}. \label{eqn:explowerbound}
\end{equation}

Use of Lemma~\ref{lem:exptoprob} immediately tells us that
\begin{equation}
	\Prob \left[ \delta^{(\I)} \geq \lambda \right] \geq \frac{1}{4\sqrt{n}} - \frac{\lambda^2}{\eta}.
\end{equation}

Setting $\lambda = \frac{\min(1,\epsilon)}{\sqrt{8} n^{1/4}}$ and recalling that $\epsilon \leq \eta$ gives
\begin{equation}
	\Prob \left[ \delta^{(\I)} \geq \frac{\min(1,\epsilon)}{\sqrt{8} n^{1/4}} \right] \geq \frac{1}{4\sqrt{n}} - \frac{\min(1,\epsilon)^2}{8 \eta \sqrt{n}} \geq \frac{1}{4\sqrt{n}} - \frac{1}{8\sqrt{n}} = \frac{1}{8\sqrt{n}}.
\end{equation}

Suppose we repeat this test $k$ times, choosing different bases $\I_1, \dots, \I_k$ uniformly at random according to the Haar measure on $\LieG{U}(n)$. We call $\I$ `good' if $\delta^{(\I)} \geq \frac{\min(1,\epsilon)}{\sqrt{8} n^{1/4}}$. Let $K(k)$ represent the event that at least one of $\I_1, \dots, \I_k$ is `good'. Then
\begin{equation}
	\Prob[K(k)] \geq 1-\left( 1-\frac{1}{8\sqrt{n}} \right)^k.
\end{equation}

Setting $k=32\sqrt{n}$ gives
\begin{equation}
	\Prob[K(32\sqrt{n})] \geq 1-\frac{1}{e^4} \geq \frac{49}{50}.
\end{equation}

\subsection{Executing the algorithm}

The algorithm given in Corollary~\ref{cor:unif} succeeds with probability at least $\frac{2}{3}$.
Suppose we run the algorithm $l$ times in total. Then by using a Chernoff bound (eq. (1) in~\cite{canonne_testing_2015}), it follows that
\begin{itemize}
	\item if the distributions are `equal', $\Prob \left[\mbox{algorithm outputs \equal} \geq \frac{1}{2} l \mbox{ times} \right] \geq 1 - e^{-l/18}$;
	\item if the distributions are `far', $\Prob \left[\mbox{algorithm outputs \far} \geq \frac{1}{2} l \mbox{ times} \right] \geq 1 - e^{-l/18}$.
\end{itemize}

The full algorithm has been set out below.

\begin{alg}[H]
	\caption{$\textsc{MaximallyMixedStateTest}(\rho)$}\label{alg:qspectrum}
	
	\textbf{Input:} $\pqcond$ access to a probability distribution $D_{[n]}^{(\rho, \I)}$ over $[n]$ for any $\I$, as described in Section~\ref{s:qspectrumtesting}, and parameter $\epsilon$. Set $l=128 \log n$.
	
	\begin{enumerate}
		\item Choose $k=32\sqrt{n}$ bases $\I_1, \dots \I_k$ uniformly at random. \label{alg:stepq1}
		\item For each $j=1, \dots, k$, run the algorithm given in Corollary~\ref{cor:unif} on the distribution  $D_{[n]}^{(\rho, \I_j)}$ $l$ times, returning $u_j=1$ if at least $\frac{1}{2} l$ of the runs return \far, and $u_j=0$ otherwise. \label{alg:stepq2}
%
		\item If any $u_j$ is equal to $1$, output \far, otherwise output \equal. \label{alg:stepq3}
	\end{enumerate}
\end{alg}

The analysis of this algorithm is separated into two cases:
\begin{itemize}
	\item $\|\rho-\iden/n\|_1 = 0$:	
	The probability that a particular $u_j$ is equal to $1$ in \step{q2} is less than $e^{-l/18}$. Thus, the probability of the algorithm failing is, by the union bound\footnote{For a countable set of events $A_1, A_2, \dots$, we have that $\Prob\left[\bigcup_i A_i \right] \leq \sum_i \Prob[A_i]$.}, at most $32\sqrt{n} \hspace{0.5em} e^{-l/18} \leq \frac{1}{3}$, and hence the algorithm outputs \equal~with probability at least $\frac{2}{3}$.
	\item $\|\rho-\iden/n\|_1 \geq \epsilon$:
	Suppose that $\I_j$ is `good'. Then with probability at least $1-e^{l/18} \geq \frac{99}{100}$, we get $u_j=1$, and the algorithm will output \far~in \step{q3}. The probability that one of $\I_1, \dots, \I_k$ is `good' is at least $\frac{49}{50}$, and hence the probability that the entire algorithm outputs \far~is at least $0.97 \geq \frac{2}{3}$.
\end{itemize}

Each run of the algorithm given in Corollary~\ref{cor:unif} requires $\tilde{O} (n^{1/4}/\epsilon)$ $\pqcond$ queries if $\epsilon \leq 1$, and hence in total Algorithm~\ref{alg:qspectrum} requires
\begin{equation}
	\tilde{O} \left(k l \frac{n^{1/4}}{\epsilon}\right) = \tilde{O} \left(\frac{n^{3/4}}{\epsilon}\right)
\end{equation}
$\pqcond$ queries.

\section{Discussion}\label{s:discussion}
Quantum conditional oracles give us new insights into the kinds of information that are useful for testing properties of distributions.
In many circumstances such oracles serve as natural models for accessing information.
In addition, they are able to demonstrate separations in query complexity between a number of problems, thereby providing interesting new perspectives on information without trivialising the set-up. We now mention some open questions.

Group testing and pattern matching are further important areas to which our notion of a quantum conditional oracle could be applied. The structure of questions commonly considered there suggest that use of $\pqcond$ would decrease the query complexity dramatically for many practically relevant problems compared to the best known quantum and classical algorithms~\cite{problem_33, de_bonis_optimal_2005, ambainis_efficient_2015, bonis_constraining_2015}.

In our algorithms, we have made particular use of the $\pqcond$ oracle, the quantum analogue of the $\pcond$ oracle. It is noted in~\cite{canonne_testing_2015} that the unrestricted $\cond$ oracle offers significant advantages over the $\pcond$ oracle for many problems, and it is possible that similar improvements could be achieved for some quantum algorithms through use of the unrestricted $\qcond$ oracle.

The algorithm that we present for quantum spectrum testing (Algorithm~\ref{alg:qspectrum}) chooses several bases $\I_1, \dots, \I_k$ independently and uniformly at random. It remains open, however, whether or not a more adaptive approach to choosing bases will yield an algorithm requiring fewer queries.

Our definition of the spectrum testing problem in Section~\ref{s:qspectrumtesting} made use of the trace norm, $\| \cdot \|_1$. One might wonder how the query complexity would be affected if the problem were defined with a different norm, such as the operator norm\footnote{For an $(n\times n)$ matrix $A$, $\|A\|_\infty = \max_{i\in[n]} a_i$, where the $a_i$ are the singular values of $A$.}, $\| \cdot \|_\infty$. Numerical simulations and limited analysis suggest that the probability of picking a `good' basis $\I$ tends to $1$ as $n \rightarrow \infty$, and hence that the number of queries required to distinguish between the two options would be independent of $n$. We leave the proof of this conjecture as an open question.


\appendix
\section*{Appendix}
\section{Efficient comparison of conditional probabilities}
\subsection{Proof of Lemma~\ref{lem:probabilitydependence}}\label{ss:probabilitydependence}

We first state the procedure for $\textsc{Alg2}(\xi, \epsilon, \delta)$ $( \epsilon > 0, \delta \in (0, 1] )$.
\begin{enumerate}
	\item Run $\textsc{Alg1}(\xi, \epsilon, \frac{1}{10})$ $m$ times, where $m=\Theta(\log(1/\delta))$ (and such that $m$ is even). Denote the outputs as $\tilde{f}_1, \dots \tilde{f}_m$, labelled such that $\tilde{f}_1 \leq \cdots \leq \tilde{f}_m$. \label{alg:alg2_1}
	\item Output $\tilde{f}_{m/2}$. \label{alg:alg2_2}
\end{enumerate}

Consider $\textsc{Alg1}(\xi, \epsilon, \frac{1}{10})$, and let $E_1$ be the event that $|\tilde{f}(\xi)-f(\xi)|\leq \epsilon$, which is equivalent to the event that $\tilde{f}(\xi) \in [f(\xi)-\epsilon, f(\xi)+\epsilon]$. Then we have that $\Prob[E_1]\geq \frac{9}{10}$.

Let $Y$ be a random variable that takes the value $1$ if $E_1$ occurs during a run of $\textsc{Alg1}(\xi, \epsilon, \frac{1}{10})$, and $0$ otherwise. Let $Y_1, \dots, Y_m \sim Y$ be i.i.d. random variables. Let $E_2$ be the event that at least $\frac{8}{10}m$ of the $Y_i$ output $1$ (i.e. the event that $E_1$ occurs at least $\frac{8}{10}m$ times).

Using a Chernoff bound (here we use eq. (1) in~\cite{canonne_testing_2015}), it is easy to see that $\Prob[E_2] \geq 1 - \exp(-\tfrac{1}{50} m)$.

Setting $m=\Theta(\log(1/\delta))$ and rounding $m$ up to the nearest multiple of $2$ then gives $\Prob[E_2] \geq 1 - \delta$.

Thus, we see that, with probability at least $1-\delta$, Step~\ref{alg:alg2_1} results in $\tilde{f}_1 \leq \dots \leq \tilde{f}_m$ such that $|\tilde{f}_i - f(\xi)| \leq \epsilon$ for at least $\frac{8}{10}m$ values of $i\in \{1, \dots, m\}$.
Henceforth we assume that $E_2$ occurs. Now consider $\tilde{f}_{m/2}$. Suppose $\tilde{f}_{m/2}\notin [f(\xi)-\epsilon, f(\xi)+\epsilon]$. Then one of the two following statements must hold:
\begin{itemize}
	\item $\tilde{f}_{m/2} < f(\xi) - \epsilon$. Since $\tilde{f}_1\leq \dots \leq \tilde{f}_{m/2}$, we have that $\tilde{f}_1, \dots, \tilde{f}_{m/2} \notin [f(\xi)-\epsilon, f(\xi)+\epsilon]$, which contradicts the statement of $E_2$;
	\item $\tilde{f}_{m/2} > f(\xi) + \epsilon$. Since $\tilde{f}_{m/2} \leq \dots \leq \tilde{f}_m$, we have that $\tilde{f}_{m/2}, \dots, \tilde{f}_m \notin [f(\xi)-\epsilon, f(\xi)+\epsilon]$, which contradicts the statement of $E_2$.
\end{itemize}
Hence we conclude that $\tilde{f}_{m/2} \in [f(\xi)-\epsilon, f(\xi)+\epsilon]$.
\qed

\textit{Remark:} It is worth noting that the method used in the above proof could also apply to different kinds of algorithms, and not just the specific algorithm $\textsc{Alg1}$.

\subsection{Proof of Theorem~\ref{thm:qcompare}} \label{ss:qcompareproof}

We prove this case-by-case. We define the shorthand $w(X) \coloneqq D_{X \cup Y}(X) = D(X) / D(X \cup Y)$, $w(Y) \coloneqq D_{X \cup Y}(Y) = D(Y) / D(X \cup Y)$ and note that $\rXY = w(Y) / w(X)$. In addition, since $w(X) + w(Y) = 1$, it is straightforward to show the following inequalities for a constant $T \geq 1$:
\begin{equation}
	\begin{aligned}
		\rXY \geq \frac{1}{T} \hspace{0.5em}&\implies\hspace{0.5em} w(X) \leq \frac{T}{T+1}, \hspace{0.5em} w(Y) \geq \frac{1}{T+1} \\
		\rXY \leq \frac{1}{T} \hspace{0.5em}&\implies\hspace{0.5em} w(X) \geq \frac{T}{T+1}, \hspace{0.5em} w(Y) \leq \frac{1}{T+1} \\
		\rXY \geq T \hspace{0.5em}&\implies\hspace{0.5em} w(X) \leq \frac{1}{T+1}, \hspace{0.5em} w(Y) \geq \frac{T}{T+1} \\
		\rXY \leq T \hspace{0.5em}&\implies\hspace{0.5em} w(X) \geq \frac{1}{T+1}, \hspace{0.5em} w(Y) \leq \frac{T}{T+1}		
		\label{eqn:rhoinequalities}
	\end{aligned}
\end{equation}
The strict versions of these inequalities also hold true.

\begin{enumerate}
	\item $\bm{1/K \leq \rXY \leq K}$ \label{enum:qcompareproof1}
	
	\textit{In this case we wish our algorithm to output $\tilderXY \in [1-\eta, 1+\eta] \rXY$.} \vspace{2ex}
	
	From \eqn{rhoinequalities}, we immediately have that
	\begin{equation} \frac{1}{K+1} \leq w(X), w(Y) \leq \frac{K}{K+1}. \label{eqn:wxwybounds1} \end{equation}
	
	\steps{2a}{2b} use \qestprobadd~to estimate $w(X)$ and $w(Y)$ to within additive error $\eta/3$ with probability at least $1-\delta/4$. As stated in Theorem~\ref{thm:estprobaddqcond}, this requires
	\[ \Theta\left( \max\left( \frac{\sqrt{w(X)}}{\eta}, \frac{1}{\sqrt{\eta}} \right) \log(1/\delta) \right) = \Theta\left(\frac{\log(1/\delta)}{\eta}\right) \]
	queries to $\qcond_D$, where the equality is due to the fact that $w(X)\leq 1$, and thus $M$ (defined in Algorithm~\ref{alg:qcompare}) queries suffice.
	
	\step{3a} uses \qestprobmul~to estimate $w(X)$ to within multiplicative error $\eta/3$ with probability at least $1-\delta/4$. From Theorem~\ref{thm:estprobmulqcond}, we clearly require
	\[ \Theta\left( \frac{\log(1/\delta)}{\eta \sqrt{w(Y)}} \right) = \Theta\left(\frac{\sqrt{K} \log(1/\delta)}{\eta}\right) \]
	queries to $\qcond_D$ in order to achieve these, where the equality is due to \eqn{wxwybounds1}, and thus $M$ queries suffice. \step{3b} requires the same number of queries.
	
	With a combined probability of at least $1-\delta$, \stepsi{2a}{3b} all pass, and produce the following values:
	\begin{align}
		\tilde{w}_+(X) &\in [w(X)-\eta/3, w(X)+\eta/3], \\
		\tilde{w}_+(Y) &\in [w(Y)-\eta/3, w(Y)+\eta/3], \\
		\tilde{w}_\times(X) &\in [1-\eta/3, 1+\eta/3] w(X), \\
		\tilde{w}_\times(Y) &\in [1-\eta/3, 1+\eta/3] w(Y).
	\end{align}
	
	From \eqn{wxwybounds1}, we see that
	\[ \tilde{w}_+(X), \tilde{w}_+(Y) \leq \frac{K}{K+1} + \frac{\eta}{3} < \frac{3K}{3K+1} - \frac{\eta}{3}, \]
	where the final inequality is due to the algorithm's requirement that $\frac{\eta}{3} < \frac{1}{8K}$.
	
	Thus, the checks in \steps{4a}{4b} pass, and \step{5} gives us
	\[ \tilderXY \in [1-\eta, 1+\eta] \rXY. \]
	\item This is split into 4 sub-cases.
	\begin{enumerate}[a)]
		\item $\bm{K < \rXY}$ \label{enum:qcompareproof2a}
		\begin{enumerate}[i)]
			\item $\bm{3K < \rXY}$ \label{enum:qcompareproof2ai}
			
			\textit{In this case we wish our algorithm to output \high.} \vspace{2ex}
			
			From \eqn{rhoinequalities} we have that
			\begin{equation} w(X) < \frac{1}{3K+1}, \hspace{0.5em} w(Y) > \frac{3K}{3K+1}. \label{eqn:wxwybounds2ai} \end{equation}
			
			As in \case{1}, \steps{2a}{2b} allow us to gain
			\begin{align}
				\tilde{w}_+(X) &\in [w(X)-\eta/3, w(X)+\eta/3], \\
				\tilde{w}_+(Y) &\in [w(Y)-\eta/3, w(Y)+\eta/3],
			\end{align}
			with combined probability at least $1-\delta/2$. (We henceforth assume that we have gained such values.)
			
			Using \eqn{wxwybounds2ai} it is easy to show that $\tilde{w}_+(X) < \frac{3K}{3K+1}-\frac{\eta}{3}$ and that $\tilde{w}_+(Y) > \frac{3K}{3K+1}-\frac{\eta}{3}$. Hence the check in \step{4a} passes, but the check in \step{4b} fails, and the algorithm outputs \high~and exits.
			
			\item $\bm{K < \rXY \leq 3K}$ \label{enum:qcompareproof2aii}
			
			\textit{In this case we wish our algorithm to either output \high~or output $\tilderXY \in [1-\eta, 1+\eta] \rXY$.} \vspace{2ex}
			
			From \eqn{rhoinequalities}, we have that
			\begin{equation}
				\frac{1}{3K+1} \leq w(X) < \frac{1}{K+1}, \hspace{0.5em} \left(\frac{1}{3K+1} <\right) \frac{K}{1+K} < w(Y) \leq \frac{3K}{3K+1}. \label{eqn:wxwybounds2aii}
			\end{equation}
			
			Thus, with $\Theta(\sqrt{K}\log(1/\delta)/\eta)$ queries, as in \case{1}, we gain
			\begin{align}
				\tilde{w}_+(X) &\in [w(X)-\eta/3, w(X)+\eta/3], \\
				\tilde{w}_+(Y) &\in [w(Y)-\eta/3, w(Y)+\eta/3], \\
				\tilde{w}_\times(X) &\in [1-\eta/3, 1+\eta/3] w(X), \\
				\tilde{w}_\times(Y) &\in [1-\eta/3, 1+\eta/3] w(Y),
			\end{align}
			with combined probability at least $1-\delta$. (We henceforth assume that we have gained such values.)
			
			Using \eqn{wxwybounds2aii}, we see that $\tilde{w}_+(X) < \frac{3K}{3K+1}-\frac{\eta}{3}$, and thus \step{4a} will pass.
			
			Assuming the check in \step{4b} passes, \step{5} will output $\tilderXY \in [1-\eta, 1+\eta] \rXY$.
			
			However, given the upper bound for $w(Y)$ in \eqn{wxwybounds2aii}, it is possible to have $\tilde{w}_+(Y) > \frac{3K}{3K+1}-\frac{\eta}{3}$, causing the check in \step{4b} to fail and the algorithm to output \high.
		\end{enumerate}

		\item $\bm{\rXY < 1/K}$  \label{enum:qcompareproof2b}
		\begin{enumerate}[i)]
			\item $\bm{\rXY < 1/(3K)}$  \label{enum:qcompareproof2bi}
			
			This is equivalent to the condition that $3K < \rYX$, and thus follows the same argument as \case{2ai}, with $X$ and $Y$ interchanged and an output of \low~instead of \high.
			\item $\bm{1/(3K) \leq \rXY < 1/K}$ \label{enum:qcompareproof2bii}
			
			This is equivalent to the condition that $K < \rYX \leq 3K$, and thus follows the same argument as \case{2aii}, with $X$ and $Y$ interchanged and an output of \low~instead of \high.
		\end{enumerate}
	\end{enumerate}
\end{enumerate}
\qed

\section{Quantum Spectrum Testing}

\subsection{Evaluating $M^{(d)}$} \label{ss:Md}

\textit{This section provides a lower bound for the quantity $M^{(d)}$, as defined in \eqn{Md}.}

Let $D^+$ be the set of non-negative $d_i$'s, labelled such that $d^+_0 \geq d^+_1 \geq \cdots$, and similarly let $D^-$ be the set of negative $d_i$'s, labelled such that $d^-_0 \leq d^-_1 \leq \cdots$. w.l.o.g. suppose $|D^-| \geq |D^+|$.

Let $|D^+|=\frac{n}{2}-k$, where $k\leq \frac{n}{2}$. Thus $|D^-| = \frac{n}{2} - k$. Note that $\sum_i d_i^+ = -\sum_i d_i^- = \frac{1}{2} \eta$.

We now define $\sigma$ so that the following statements are true:
\begin{itemize}
	\item $d_{\sigma(1)} = d_0^-, d_{\sigma(3)} = d_1^-, \dots, d_{\sigma(n-1)} = d^-_{\frac{n}{2}-1}$;
	\item $d_{\sigma(0)} = d_0^+, d_{\sigma(2)} = d_1^+, \dots, d_{\sigma(n-2k-2)} = d^+_{\frac{n}{2}-k-1}$;
	\item $d_{\sigma(n-2k)}, d_{\sigma(n-2k+2)}, \dots, d_{\sigma(n-2)}$ can be filled with the remaining members of $D^-$.
\end{itemize}

Then
\begin{itemize}
	\item $d_{\sigma(0)} + d_{\sigma(2)} + \cdots + d_{\sigma(n-2k-2)} = \frac{1}{2} \eta$;
	\item $d^-_0, \dots, d^-_{\frac{n}{2}-1} \leq d^-_{\frac{n}{2}-1} \implies -d_{\sigma(1)} - d_{\sigma(3)} - \cdots - d_{\sigma(n-1)} \geq -\frac{n}{2} d^-_{\frac{n}{2}-1}$;
	\item $d^-_{\frac{n}{2}}, \dots, d^-_{\frac{n}{2}+k-1} \geq d^-_{\frac{n}{2}-1} \implies d_{\sigma(n-2k)} + d_{\sigma(n-2k+2)} + \cdots + d_{\sigma(n-2)} \geq k d^-_{\frac{n}{2}-1}$.
\end{itemize}

Hence
\begin{equation}
	|d_{\sigma(0)} - d_{\sigma(1)} + d_{\sigma(2)} - \cdots - d_{\sigma(n)}| \geq \left| \frac{1}{2} \eta + \left( k-\frac{n}{2} \right) d^-_{\frac{n}{2}} \right| \geq \frac{1}{2} \eta,
\end{equation}
where the final inequality follows since $k \leq \frac{n}{2}$ and $d^-_{\frac{n}{2}} < 0$.

Thus $M^{(d)} \geq \frac{1}{2} \eta$.

\subsection{Evaluating $E_n$} \label{ss:En}

\textit{This section provides a lower bound for the quantity $E_n$, as defined in \eqn{En}.}

To evaluate $E_n$ we will use the Hermite-Genocchi Theorem (Theorem 3.3 in~\cite{atkinson2008introduction}), which relates integrals over the probability simplex to associated divided differences.

The divided difference of $n$ points $(x_0, f(x_0)), \dots, (x_{n-1}, f(x_{n-1}))$ is defined by
\begin{equation}
	f[x_0, \dots, x_{n-1}] \coloneqq \sum_{j\in[n]} \frac{f(x_j)}{\prod_{k \neq j} (x_j-x_k)}, \label{eqn:divideddifferences}
\end{equation}
where limits are taken if any of the $x_j$ are equal. It can be shown that for repeated points (see Exercise 4.6.6 in~\cite{schatzman_numerical_2002})
\begin{equation}
	f[\underbrace{x_0, \dots, x_0}_{(r_0+1)\mbox{ times}}, \underbrace{x_1, \dots, x_1}_{(r_1+1)\mbox{ times}}, x_2, \dots, x_{n-1}] = \frac{1}{r_0! r_1!} \, \frac{\partial^{r_0+r_1}}{\partial x_0^{r_0} \partial x_1^{r_1}} \, f[x_0, x_1, x_2, \dots, x_{n-1}], \label{eqn:schatzman}
\end{equation}
where $x_0, \dots, x_{n-1} \in \R$ are distinct.

Now, the Hermite-Genocchi Theorem states that
\begin{equation}
	f[x_0, \dots, x_{n-1}] = \frac{1}{(n-1)!} \int_{T_n} f^{(n-1)}(v_0 x_0 + \cdots v_{n-1} x_{n-1}) \hspace{0.5em} dV, \label{eqn:h-gthm}
\end{equation}
where we recall that $dV=(n-1)! \hspace{0.44em} \delta(1-\sum_{i\in [n]} v_i) \hspace{0.4em} dv_0 \cdots dv_{n-1}$.

In order to evaluate $E_n$, we set $f^{(n-1)}(\xi) = (n-1)! |\xi|$. Thus
\begin{equation}
	f(\xi) = \left\{ \begin{array}{lr}
		\frac{1}{n} \xi^n & \xi \geq 0 \\
		-\frac{1}{n} \xi^n & \xi < 0
	\end{array} \right.
\end{equation}
and $E_n = f[1, -1, 1, -1, \dots, 1, -1]$.

Let $m=\frac{n}{2}-1$ (i.e. $n=2m+2$). Then by \eqn{schatzman} we have that
\begin{equation}
	E_{2m+2} = \frac{1}{m!^2} \left. \partial_0^m \partial_1^m f[x_0, x_1] \right|_{x_0=-1, x_1=1},
\end{equation}
where we have used the notation $\partial_i \equiv \frac{\partial}{\partial x_i}$.

In the neighbourhood of $x_0=-1, x_1=1$, we have (by \eqn{divideddifferences})
\begin{equation}
	f[x_0, x_1] = -\frac{1}{2m+2} \frac{x_0^{2m+2} + x_1^{2m+2}}{x_0 - x_1},
\end{equation}
and thus
\begin{equation}
E_{2m+2} = -\frac{1}{2m+2} \frac{1}{m!^2} \left. A \right|_{x_0=-1, x_1=1}, \label{eqn:AinE_n}
\end{equation}
where
\begin{equation}
	A = \partial_0^m \partial_1^m \left( \frac{x_0^{2m+2} + x_1^{2m+2}}{x_0 - x_1} \right).
\end{equation}

We see that
\begin{align}
	A &= \partial_1^m \partial_0^m \left( \frac{x_0^{2m+2}}{x_0 - x_1} \right) - \partial_0^m \partial_1^m \left( \frac{x_1^{2m+2}}{x_1 - x_0} \right) \\
	&= \partial_1^m \partial_0^m \left( \frac{x_0^{2m+2}}{x_0 - x_1} \right) - (\mbox{\textit{same term with $x_0$ and $x_1$ interchanged}}). \label{eqn:A}
\end{align}

We use the Leibniz product rule\footnote{$(uv)^{(m)} = \sum_{k=0}^m \binom{m}{k} u^{(k)} v^{(m-k)}$} to deduce that
\begin{equation}
	\partial_0^m \left(x_0^n \left( \frac{1}{x_0-x_1} \right) \right)
	= \sum_{k=0}^m \binom{m}{k} \left[ \frac{(2m+2)!}{(2m+2-k)!} x_0^{2m+2-k} \right] \left[ \frac{(-1)^{m-k}}{(x_0-x_1)^{m+1-k}} (m-k)! \right],
\end{equation}
and hence that the first term in \eqn{A} is
\begin{align}
	& \hspace{1.5em} \partial_1^m \partial_0^m \left(x_0^n \left( \frac{1}{x_0-x_1} \right) \right) \\
	&= \sum_{k=0}^m \binom{m}{k} \left[ \frac{(2m+2)!}{(2m+2-k)!} x_0^{2m+2-k} \right] \left[ \frac{(-1)^{m-k}}{(x_0-x_1)^{2m+1-k}} (2m-k)! \right] \\
	&= (2m+2)! (-1)^m \sum_{k=0}^m \binom{m}{k} \frac{(-1)^k (2m-k)!}{(2m+2-k)!} \frac{x_0^{2m+2-k}}{(x_0-x_1)^{2m+1-k}} \\
	&=  (2m+2)! (-1)^m (x_0-x_1) \sum_{k=0}^m \binom{m}{k} \frac{(-1)^k}{(2m+2-k)(2m+1-k)} \left( \frac{x_0}{x_0-x_1} \right)^{2m+2-k}.
\end{align}

Substituting this into \eqn{A} and setting $x_0=-1, x_1=1$ gives
\begin{equation}
	A|_{x_0=-1, x_1=1} = -4 (2m+2)! (-1)^m \sum_{k=0}^m \binom{m}{k} \frac{(-1)^k}{(2m+2-k)(2m+1-k)} \left( \frac{1}{2} \right)^{2m+2-k}. \label{eqn:Arestricted}
\end{equation}

Now set
\begin{equation}
	B = (-1)^m \sum_{k=0}^m \binom{m}{k} \frac{(-1)^k}{(2m+2-k)(2m+1-k)} \gamma^{2m+2-k} \label{eqn:B}
\end{equation}
so that
\begin{equation}
	A|_{x_0=-1, x_1=1} = -4(2m+2)! B|_{\gamma=\frac{1}{2}}. \label{eqn:BinA}
\end{equation}

Next, note that
\begin{equation}
	\frac{\partial^2 B}{\partial \gamma^2} = (-1)^m \sum_{k=0}^m \binom{m}{k} (-1)^k \gamma^{2m-k} = \gamma^m \sum_{k=0}^m \binom{m}{k} (-\gamma)^{m-k} = \gamma^m (1-\gamma)^m,
\end{equation}
and thus
\begin{align}
	B|_{\gamma=\frac{1}{2}} &= \int_{z=0}^\frac{1}{2} \int_{\alpha=0}^z \alpha^m (1-\alpha)^m \hspace{0.5em} d\alpha \hspace{0.2em} dz + C \\
	&= \int_{z=0}^\frac{1}{2} B_z(m+1, m+1) \hspace{0.5em} dz + C,
\end{align}
where $B_z(p, q)=\int_0^z \alpha^{p-1} (1-\alpha)^{q-1} \hspace{0.5em} d\alpha$ is the incomplete Beta function. By setting $m=0$ it is easy to deduce that $C=0$.

Now, the indefinite integral of the incomplete Beta function is
\begin{equation}
	\int B_z(p, q) \hspace{0.5em} dz = z B_z(p, q) - B_z(p+1, q),
\end{equation}
and hence we deduce that
\begin{align}
	B|_{\gamma =\frac{1}{2}} &= \frac{1}{2} B_{\frac{1}{2}}(m+1, m+1) - B_{\frac{1}{2}}(m+2, m+1) \\
	&= \int_0^{\frac{1}{2}} \alpha^m (1-\alpha)^m d\alpha - \int_0^{\frac{1}{2}} \alpha^{m+1} (1-\alpha)^m d\alpha \\
	&= \frac{1}{2} \Bigg[ \int_0^{\frac{1}{2}} \alpha^m (1-\alpha)^m \underbrace{(1-2\alpha)}_{=(1-\alpha)-\alpha} \hspace{0.5em} d\alpha \Bigg] \\
	&= \frac{1}{2} \int_0^{\frac{1}{2}} (\alpha^m (1-\alpha)^{m+1} - \alpha^{m+1} (1-\alpha)^m) \hspace{0.5em} d\alpha \\
	&= \frac{1}{2(m+1)} \int_0^{\frac{1}{2}} \frac{d(\alpha^{m+1} (1-\alpha)^{m+1})}{d\alpha} \hspace{0.5em} d\alpha \\
	&= \frac{1}{2(m+1)} [\alpha^{m+1} (1-\alpha)^{m+1}]_0^{1/2} \\
	&= \frac{1}{2^{2m+3} (m+1)}.
\end{align}

Substituting this into \eqn{BinA} and subsequently into \eqn{AinE_n}, we get
\begin{align}
	E_{2m+2} &= -\frac{1}{2m+2} \frac{1}{m!^2} \cdot -4(2m+2)! \cdot \frac{1}{2^{2m+3} (m+1)} \\
	&= \frac{(2m+1)!}{2^{2m+1} m!^2 (m+1)} \\
	&= \frac{2m+1}{m+1} \cdot \frac{(2m)!}{m!^2} \cdot \frac{1}{2^{2m+1}}. \label{eqn:E_n}
\end{align}

Stirling's formula~\cite{robbins1955remark} tells us that for $m\geq 1$
\begin{equation}
	\sqrt{2 \pi} m^{m+\frac{1}{2}} e^{-m} < m! < \sqrt{2\pi} m^{m+\frac{1}{2}} e^{-m} e^{\frac{1}{12}}, \label{eqn:stirling}
\end{equation}
and thus
\begin{equation}
	\frac{(2m)!}{m!^2} > \frac{\sqrt{2 \pi} (2m)^{2m+\frac{1}{2}} e^{-2m}}{2\pi m^{2m+1} e^{-2m} e^{\frac{1}{6}}} = \frac{2^{2m} e^{-\frac{1}{6}}}{\sqrt{m\pi}}.
\end{equation}

Since $\frac{2m+1}{m+1} \geq \frac{3}{2}$, \eqn{E_n} tells us that
\begin{equation}
	E_{2m+2} > \frac{3e^{-\frac{1}{6}}}{4\sqrt{\pi}} \cdot \frac{1}{\sqrt{m}}.
\end{equation}

Replacing $m$ with $\frac{n}{2}-1$, we deduce that
\begin{align}
	E_n > \frac{3e^{-\frac{1}{6}}}{4\sqrt{\pi}} \cdot \frac{1}{\sqrt{\frac{n}{2}-1}} > \frac{3e^{-\frac{1}{6}}}{4\sqrt{\pi}} \cdot \frac{1}{\sqrt{\frac{n}{2}}} = \frac{3e^{-\frac{1}{6}}}{2\sqrt{2\pi}} \cdot \frac{1}{\sqrt{n}} > \frac{1}{2\sqrt{n}}. \label{eqn:E_nbound}
\end{align}

The case when $m=0$ is easily dealt with through direct calculation using \eqn{E_n}, giving $E_2=\frac{1}{2}$. Hence we conclude that \eqn{E_nbound} holds for all positive, even $n$.

\textit{Remark:} Using the asymptotic form of Stirling's formula, it can be shown that $E_n \sim \sqrt{\frac{2}{\pi}} \frac{1}{\sqrt{n}}$ for large $n$.

\section*{Acknowledgements}
I.S.B.S. thanks EPSRC for financial support. S.S. acknowledges the support of Sidney Sussex College.

\bibliographystyle{plain}
\bibliography{references}

\end{document}